\numberwithin{theorem}{section}
\numberwithin{equation}{section}
\definecolor{yscol}{HTML}{6622AA}
\definecolor{cobalt}{rgb}{0.0, 0.28, 0.67}
\renewcommand\subsubsection{\@startsection{subsubsection}{3}{\z@}%
                                     {-3.25ex\@plus -1ex \@minus -.2ex}%
                                     {1.5ex \@plus .2ex}%
                                     {\normalfont\normalsize\bfseries}}
\definecolor{darkgreen}{rgb}{0.0, 0.5, 0.0} 
\title[Surrogate Models for Vibrational Entropy]{Surrogate Models for Vibrational Entropy Based on a Spatial Decomposition}
\author{Tina Torabi}
\address{Tina Torabi\\
Department of Mathematics\\
University of British Columbia\\
1984 Mathematics Road\\
Vancouver, British Columbia\\
Canada
}
\email{torabit@math.ubc.ca}
\author{Yangshuai Wang}
\address{Yangshuai Wang\\
Department of Mathematics\\
University of British Columbia\\
1984 Mathematics Road\\
Vancouver, British Columbia\\
Canada
}
\email{yswang2021@math.ubc.ca}
\author{Christoph Ortner}
\address{Christoph Ortner\\
Department of Mathematics\\
University of British Columbia\\
1984 Mathematics Road\\
Vancouver, British Columbia\\
Canada
}
\email{ortner@math.ubc.ca}
\date{\today}
\begin{document}

\maketitle


\renewcommand\arraystretch{1.5}

\def\per{{\rm per}}
\def\mA{{\sf A}}
\def\T{\mathcal{T}}
\def\D{\mathcal{D}}
\def\Z{\mathbb{Z}}
\def\a{\rm a}
\def\asP{\textnormal{\bf (P)}}
\def\asD{\textnormal{\bf (D)}}
\def\asRC{\textnormal{\bf (RC)}}
\def\asSE{\textnormal{\bf (S)}}
\def\asSER{\textnormal{\bf (S.R)}}
\def\assERL{\textnormal{\bf (RL)}}
\def\asSEL{\textnormal{\bf (S.L)}}
\def\asSEH{\textnormal{\bf (S.H)}}
\def\asSEP{\textnormal{\bf (S.P)}}
\def\asSEI{\textnormal{\bf (S.I)}}
\def\asSEPS{\textnormal{\bf (S.PS)}}
\def\asLS{\textnormal{\bf (LS)}}
\def\asS{\textnormal{\bf (S)}}
\def\pt{\textnormal{\bf (PT)}}
\def\VMM{V^{\rm ACE}}
\def\VQM{V^{\rm QM}}
\def\Vta{V^{\rm Taylor}}
\def\Vhom{V^{\rm h}}
\def\Ih{I_1^{\rm h}}
\def\Wb{\mathbf{W}}
\def\Vsh{V^{\rm SHIPs}}
\def\E{\mathcal{E}}
\def\EH{\E^{\rm H}}
\def\uh{u_{\rm h}}
\def\vh{v_{\rm h}}
\def\p0{\pmb{0}}
\def\Egfc{\E^{\rm GFC}}
\def\EHT{\widetilde{\E}^{\rm H}}
\def\Vb{V^{\rm BUF}_{\#}}
\def\Usx{\mathscr{U}^{\rm H}}
\def\Us{\mathscr{U}}
\def\uH{\bar{u}^{\rm H}}
\def\UsH{{\mathscr{U}}^{1,2}}
\def\THu{T^{\rm H}\bar{u}}
\def\lgamma{\ell^2_{\gamma}}
\def\efit{\varepsilon^{\rm E}}
\def\ffit{\varepsilon^{\rm F}}
\def\fcfit{\varepsilon^{\rm FC}_{\rm hom}}
\def\vfit{\varepsilon^{\rm V}}
\def\L{\Lambda}
\def\R{\mathbb{R}}
\def\Rl{\mathcal{R}}
\def\Rg{\mathcal{R}^{*}}
\def\Adm{{\rm Adm}}
\def\<{\langle}
\renewcommand{\>}{\rangle}
\def\LQM{\Lambda^{\rm QM}}
\def\LMM{\Lambda^{\rm MM}}
\def\LFF{\Lambda^{\rm FF}}
\def\Lbuf{\Lambda^{\rm BUF}}
\def\Lhom{\L^{\rm h}}
\def\LhomS{\L^{\rm h}_*}
\def\OQM{\Omega^{\rm QM}}
\def\OMM{\Omega^{\rm MM}}
\def\OFF{\Omega^{\rm FF}}
\def\Obuf{\Omega^{\rm BUF}}
\def\RQM{R_{\rm QM}}
\def\RMM{R_{\rm MM}}
\def\RFF{R_{\rm FF}}
\def\Rbuf{R_{\rm BUF}}
\def\rcut{R_{\rm c}}
\def\Rcore{R_{\rm DEF}}
\def\FMM{\F^{\rm ACE}}
\def\FQM{\F^{\rm QM}}
\def\F{\mathcal{F}}
\def\bF{\textbf{F}}
\def\N{\mathbb{N}}
\def\FH{\F^{\rm H}}
\def\Fa{\widetilde{\F}}
\def\Ea{\widetilde{\E}}
\def\Adm{{\rm Adm}}
\def\Admu{\mathscr{A}}
\def\wf{\mathfrak{w}}
\def\Hw{\mathscr{L}}
\def\fc{f_{\rm c}}
\def\Rc{R_{\rm c}}
\def\dx{\,{\rm d}x}
\def\dt{\,{\rm d}t}
\def\ds{\,{\rm d}s}
\def\Wcb{W_{\rm cb}}
\def\A{\mathbf{A}}
\def\n{\mathfrak{n}}

\begin{abstract}
The temperature-dependent behavior of defect densities within a crystalline structure is intricately linked to the phenomenon of vibrational entropy. Traditional methods for evaluating vibrational entropy are computationally intensive, limiting their practical utility. Building on~\cite{julian2016} we show that total entropy can be decomposed into atomic site contributions and rigorously estimate the locality of site entropy. This analysis suggests that vibrational entropy can be effectively predicted using a surrogate model for site entropy. We employ machine learning to develop such a surrogate models employing the Atomic Cluster Expansion model. 
We supplement our rigorous analysis with an empirical convergence study. 
In addition we demonstrate the performance of our method for predicting vibrational formation entropy and attempt frequency of the transition rates, on point defects such as vacancies and interstitials.
\end{abstract}


\section{Introduction}
\label{sec:intro}


Entropy plays a crucial role in understanding and characterizing the behavior exhibited by condensed matter systems, particularly in the context of crystalline materials undergoing the aging process. The thermodynamic and kinetic characteristics of defects within these materials significantly influence their evolution, resulting in a diverse range of morphologies distinguished by variations in size, character, and density. The role of entropy becomes particularly prominent in indicating how the stability of defect populations changes in response to temperature fluctuations.

Vibrational entropy, arising from the thermal movement of the lattice structure, emerges as a critical component in this context. At temperatures below the melting point, the vibrational aspect takes precedence, becoming the primary contributor to the total entropy. Consequently, assessing vibrational entropy provides valuable insights into the behavior of system in various practical scenarios. However, a fundamental challenge in evaluating harmonic vibrational entropy lies in the computational costs. The classical approach involves calculating the dynamical matrix, a process requiring on the order of $O(N^2)$ calculations, and subsequent diagonalization, which demands $O(N^3)$ calculations, with $N$ representing the total number of atoms in the system. As a result, the computation of entropy on large length scales becomes computationally intractable, presenting a computational challenge that needs to be addressed.

To overcome this computational challenge, recent work by Lapointe et al.~\cite{PhysRevMaterials.6.113803} has employed machine learning surrogate models utilizing atomic environment descriptors. These models offer precise predictions for the vibrational formation entropy of point defects, achieving significant computational efficiency with calculations on the order of $O(N)$. The versatility of these data-driven models is further enhanced by the incorporation of higher-dimensional descriptor spaces, establishing a non-linear relationship between descriptors and observables \cite{PhysRevMaterials.4.063802}. This advancement enables a more comprehensive representation of physical phenomena, such as defect formation and migration, essential for accurate rate approximations in transition state theory. However, it is noteworthy that these studies implicitly assumed that the total vibrational entropy can be expressed as a local contribution, and the resulting site entropy is local. Limited research has delved into a rigorous analysis of vibrational entropy. To the best knowledge of the authors, the only existing work in this domain is~\cite{julian2016}. In this study, the formation free energy and transition rates between stable configurations, assessed using periodic supercell approximations, converge as the cell size increases. While this work implicitly utilized the locality of site entropy, it cannot be directly applied to justify machine learning surrogates.

The purpose of the present paper is to undertake a rigorous analysis of the locality of site entropy, utilizing insights presented from~\cite{julian2016}, where total entropy was deconstructed into contributions from individual atomic sites. This analysis establishes the groundwork for constructing entropy based on the local atomic environment. Building upon this foundation, we employ machine learning surrogate models with atomic environment descriptors to precisely and efficiently predict the vibrational formation entropy of crystalline defects. To that end we utilize the Atomic Cluster Expansion (ACE) method, widely applied in the field of machine-learning interatomic potentials (MLIPs) for atomistic simulations. Numerical experiments are conducted, focusing on point defects such as vacancies and interstitials. The robustness of our approach is demonstrated by accurately predicting vibrational entropy and the attempt frequency for the transition rates governing point defect migration, a critical element in transition state theory rate approximations. These results lay the groundwork for predicting various entropy-dependent physical quantities, including but not limited to dynamical properties like free energy and diffusion coefficients, which can potentially be integrated into kinetic Monte Carlo methods~\cite{voter2007introduction}, where transition rates from state to state are pivotal.

This paper primarily concentrates on relatively simple settings (point defects) to provide a comprehensive theoretical and practical analysis of key concepts, but a wide variety of extensions to more complex classes of crystalline defects, and to other classes of materials are conceivable. 


\subsection*{Outline}

In Section \ref{sec:pre}, we establish the foundation for our theories by offering background information and preliminaries. This includes rigorous derivations of both total and site entropy. Section \ref{sec:th_rs} presents our primary findings (Theroem~\ref{thm:local}), focusing on the locality of site entropies in an infinite lattice $\Lambda$, supported by numerical results validating our main theorem. Moving on to Section \ref{sec:num_rs}, we delve into the background of the surrogate models utilized for entropy fitting and subsequently present our entropy and corresponding attempt frequency fitting results. The Appendices contain a compilation of auxiliary results and proofs essential for the understanding and validation of the preceding sections.

\subsection*{Notation}
Let \(X\) be a (semi-)Hilbert space and let its dual be represented by  \(X^*\). The duality pairing is denoted by \(\langle \cdot, \cdot \rangle\). The space of bounded linear operators mapping from \(X\) to another (semi-)Hilbert space \(Y\) is expressed as \(\mathcal{L}(X, Y)\). 

For $\mathcal{E} \in C^2(X)$, the first and second variations are denoted by
$\<\delta \mathcal{E}(u), v\>$ and $\<\delta^2 \mathcal{E}(u) v, w\>$ for $u,v,w\in X$, i.e., 
\begin{equation}
    \begin{aligned}
    \langle\delta\mathcal{E}(u),v\rangle&:=\lim_{t\rightarrow0}t^{-1}\big(\mathcal{E}(u+tv)-\mathcal{E}(u)\big),\\
    \langle\delta^{2}\mathcal{E}(u)v,w\rangle&:=\lim_{t\rightarrow0}t^{-2}\langle\delta\mathcal{E}(u+tw)-\delta\mathcal{E}(u),v\rangle. \nonumber
    \end{aligned}
\end{equation}
It is easy to see that $\delta \E(x) \in X^*$ and $\delta^2 \E(x) \in \mathcal{L}(X, X^*)$.

Assuming $\Lambda$ is a countable index set (often a Bravais lattice, $\L = \mA \Z^d$, with $\mA \in \R^{d\times d}$ being non-singular), we define $\ell^2(\Lambda;\R^m)=\{u : \L \to \R^m : \sum_{\ell \in \L} |u|^2 < \infty\}$. If the range is evident from the context, this can be abbreviated to $\ell^2(\Lambda)$ or simply $\ell^2$.

For $A \in \mathcal{L}\big(\ell^2(\Lambda;\mathbb{R}^m), \ell^2(\Lambda;\mathbb{R}^m)\big)$, we define its element $A_{\ell i n j} := \big\langle A (\delta_\ell e_i), \delta_n e_j \big\rangle_{\ell^2(\Lambda;\mathbb{R}^d)}$, where $\ell,n \in \Lambda$ and $i,j \in \{1,\ldots,m\}$. Additionally, we denote $A_{\ell n} = (A_{\ell i n j})_{ij} \in \mathbb{R}^{m \times m}$ as the matrix blocks corresponding to atomic sites. The identity is represented by $(I_{\ell^2(\Lambda;\mathbb{R}^m)})_{\ell i n j} := \delta_{\ell n} \delta_{ij}$.

For $j\in\N$, ${\bm{g}}\in (\R^d)^A$, and $V \in C^j\big((\R^d)^A\big)$, we define the notation
\begin{eqnarray*}
	V_{,{\bm \rho}}\big({\bm g}\big) :=
	\frac{\partial^j V\big({\bm g}\big)}
	{\partial {\bm g}_{\rho_1}\cdots\partial{\bm g}_{\rho_j}}
	\qquad{\rm for}\quad{\bm \rho}=(\rho_1, \ldots, \rho_j)\in A^{j}.
\end{eqnarray*}
 
The symbols $C, c$ denote generic positive constants that may change from one line
of an estimate to the next. When estimating rates of decay or convergence, $C, c$
will always remain independent of approximation parameters such as the system size, the configuration of the lattice and the test functions. The dependence of $C, c$ will be clear from the context or stated explicitly. To further simplify notation we will often write $\lesssim$ to mean $\leq C$.

\section{Background and Preliminaries}
\label{sec:pre}
Although our approach to constructing surrogates for vibration entropy is very general, our rigorous analysis relies on the setting of crystalline solids, potentially with defects. 
To motivate the formulation of our main results in this context, we will review the framework developed in \cite{2021-defectexpansion, bcshap2016, ortner2023framework} and in particular the renormalisation analysis of vibrational formation entropy~\cite{julian2016}. For the sake of simplicity of presentation, we will skip over some technical details but fill these gaps in Appendix~\ref{sec: appendix_prelem}.

Let $d\in\{2,3\}$ be the dimension of the system. A homogeneous crystal reference configuration is given by the Bravais lattice $\Lambda^{\rm hom}=\mathsf{A}\Z^d$, for some non-singular matrix $\mathsf{A} \in \mathbb{R}^{d \times d}$. We admit only single-species Bravais lattices. There are no conceptual obstacles to generalising our work to multi-lattices \cite{olson2023elastic}, however, the technical details become more involved. The reference configuration with defects is a set $\Lambda \subset \R^d$. The mismatch between $\Lambda$ and $\Lambda^{\rm hom}$ represents possible defected configurations. We assume that the defect cores are localized, that is, there exists $R^{\rm def}>0$, such that $\Lambda \backslash B_{R^{\rm def}} = \Lambda^{\rm hom} \backslash B_{R^{\rm def}}$.

The displacement of the infinite lattice $\Lambda$ is a map $u \colon \Lambda \to \R^m$. For $\ell, \rho \in \Lambda$, we denote discrete gradients (or, differences) by $D_\rho u (\ell) := u(\ell + \rho) - u(\ell)$. Higher order differences are denoted by $D_{\boldsymbol{\rho}} = D_{\rho_1} \cdots D_{\rho_j}$ for a $\boldsymbol{\rho} = (\rho_1, ..., \rho_j) \in \Lambda^j$. For a subset $\mathcal{R} \subset \Lambda-\ell$, we
define $Du(\ell) := D_{\mathcal{R}_{\ell}} u(\ell) := \big(D_{\rho} u(\ell)\big)_{\rho\in\mathcal{R}_{\ell}}$. We assume throughout that $\mathcal{R}_{\ell}$ is {\em finite} for each site $\ell\in\Lambda$. An extension of our analysis to infinite interaction range is not conceptually difficult but involves additional technical and notational complexities~\cite{chen19}. 

We consider the site potential to be a collection of mappings $V_{\ell}:(\R^m)^{\mathcal{R}_\ell}\rightarrow\R$, which represent the energy distributed to each atomic site. We make the following assumption on regularity and symmetry:
$V_\ell \in C^K\big((\R^m)^{\mathcal{R}_\ell}\big)$ for some $K$ and $V_\ell$ is homogeneous outside the defect region, namely, $V_\ell = V$ and $\mathcal{R}_\ell = \mathcal{R}$ for $\ell \in \Lambda \setminus B_{R^{\rm def}}$. Furthermore, $V$ and $\mathcal{R}$ have the following point symmetry: $\mathcal{R} = -\mathcal{R}$, it spans the lattice $\textrm{span}_\Z \mathcal{R}= \Lambda$, and $V\big(\{-A_{-\rho}\}_{\rho\in\mathcal{R}}\big) = V(A)$. We refer to~\cite[Section 2.3]{chen19} for a detailed discussion of those assumptions.

\subsection{Supercell model}
\label{sec:sub:form}
%
Following \cite{julian2016} we initially consider the periodic setting and then reference the established existence of the thermodynamic limit and proceed to analyze the infinite lattice. To that end, let $\mathsf{B}=(b_1, \ldots, b_d) \in \R^{d \times d}$ be invertible such that $b_i \in \mathsf{A}\Z^d$. For a sufficiently large $N \in \N$ with $R^{\rm def} \ll N$, we denote
\[
\L_N:=\L\cap\mathsf{B}(-N/2, N/2]^d \quad \textrm{and} \quad \L_N^{\per}:= \bigcup_{\alpha \in N\Z^d}(\mathsf{B}\alpha + \L_N),
\]
where $\L_N$ is the periodic computational domain and $\L_N^{\per}$ is the periodically repeated domain. 

We define the space of periodic displacements to be
\begin{align*}
\mathcal{W}^{\text{per}}_N := \{u:\L_N^{\per} \rightarrow \R^m~|~u(\ell+\mathsf{B}\alpha) = u(\ell) ~\textrm{for}~\alpha \in N\Z^d\}.
\end{align*}
An equilibrium defect geometry configuration is determined by
\begin{align}\label{eq:opt}
    \bar{u}_N &\in \arg\min\{\mathcal{E}_N(u)~|~u \in \mathcal{W}^{\text{per}}_N\}, \nonumber \\
    \text{where}~\mathcal{E}_N&(u) := \sum_{\ell \in \Lambda_N} V_{\ell}\big(Du(\ell)\big) \quad \text{for}~u \in \mathcal{W}^{\text{per}}_N.
\end{align}
Analogously, for future reference, we introduce the energy functional for the homogeneous (defect-free) supercell as
\begin{equation}\label{eq:Ehom}
   \mathcal{E}^{\text{hom}}_N(u) := \sum_{\ell \in \Lambda_N} V\big(Du(\ell)\big) \quad \text{ for } u \in \mathcal{W}^{\text{per}}_N.
\end{equation}

\subsection{Vibrational entropy}
\label{subsec:FFE}

The vibrational entropy is closely related to the {\it formation free energy}. This is used in, for example, phase diagrams \cite{FULTZ2010247}, diffusion coefficients \cite{10.1063/5.0007178}, equilibrium concentration of defects \cite{PhysRev.93.265}.

In the harmonic approximation model (thus incorporating only vibrational entropy into the
model) we approximate a nonlinear potential energy landscape (cf.~\eqref{eq:opt} and \eqref{eq:Ehom}) by a quadratic
expansion about an energy minimizer of interest,
\begin{align}
   \E^{\rm hom}_N(w) & \approx
            {\textstyle \frac{1}{2}} \langle H^{\rm hom}_N w,w \rangle,
            \quad \text{and} \\
   \E_N(\bar{u}_N + w) & \approx
      \E_N(\bar{u}_N) + {\textstyle \frac{1}{2}} \langle H_N(\bar{u}_N) w, w \rangle,
\end{align} 
where we use the fact that  $\delta\mathcal{E}^{{\rm hom}}_N({\bf 0})$ and $\delta\mathcal{E}_N(\bar{u}_N)$ vanish. Additionally, we denote $H_N(u) := \delta^2\mathcal{E}_N(u), \; H^{\text{hom}}_N(u) := \delta^2 \mathcal{E}^{\text{hom}}_N(u),$ and $H^{\text{hom}}_N := H^{\text{hom}}_N({\bf 0})$ for the Hessians of systems.




The free energy of a system is intimately tied to the partition function. The harmonic approximation of the partition function is given by



\begin{equation}
\begin{aligned}
Z_{\rm h} &= e^{-\beta \E_N(\bar{u}_N)} \int e^{ -\beta \frac{1}{2} \langle H_N(\bar{u}_N) w, w \rangle } \mathrm{d} w \\
&= C_{\beta,N} \cdot e^{-\beta \E_N(\bar{u}_N)} \cdot \big(\text{det}^{+}(H_N)\big)^{-1/2},
\end{aligned}
\end{equation}
where the constant $C_{\beta, N} = (2\pi/ \beta)^{((2N)^d-1)m/2}$ with $\beta:=1/k_{B}T$ the inverse temperature. Here $\det^+(H_N) := \prod_j \lambda_j$ with $\lambda_j$ representing the positive eigenvalues of $H_N$ including multiplicities~\cite{julian2016}.
The harmonic approximation to the partition function for the homogeneous system can be analogously defined by
\begin{align}
    Z_{\rm h}^{\text{hom}}
    = C_{\beta,N} \cdot \big(\text{det}^{+}(H^{\text{hom}}_N)\big)^{-1/2}.
\end{align}
Hence, the harmonic approximation of the {\it Helmholtz formation free energy} is defined by
\begin{equation} \label{eq:our_def}
\begin{aligned}
\mathcal{A}(\bar{u}_N) &:= -\frac{1}{\beta} \log \left(\frac{Z_{\rm h}}{Z_{\rm h}^{\text{hom}}}\right)\\ 
&= \mathcal{E}_N(\bar{u}_N) - k_B T \Big ( -\frac{1}{2} \log \text{det}^{+} \big(H_N(\bar{u}_N)\big) + \frac{1}{2} \log \text{det}^{+} \big(H^{\text{hom}}_N\big) \Big ) \\
&= \mathcal{E}_N(\bar{u}_N) - T \mathcal{S}_N(\bar{u}_N),
\end{aligned}
\end{equation}
where the vibrational formation entropy is 
\begin{align}
\mathcal{S}_N(\bar{u}_N) := \frac{k_B}{2}\Big ( \log \text{det}^{+} \big(H^{\text{hom}}_N\big) - \log \text{det}^{+} \big(H_N(\bar{u}_N) \big)\Big ). 
\end{align}


In solid state physics, it is common to employ the density of states (DOS) as the basis for defining entropy. In Appendix \ref{sec:projected_normal_modes}, we show that this perspective yields the same entropy as \eqref{eq:our_def}. 

An important material property related to the vibrational entropy is the {\em transition rate} (e.g., in the context of defect motion via diffusion). We will explore this in Section \ref{sec:sub:TST} how our techniques can also be applied in that setting.

\subsection{Site entropy}
\label{sec:sub:site_entropy}
%
For systems with a large number of atoms, the computational expense associated with evaluating vibrational entropy via the eigendecomposition, or Cholesky factorisation, of the hessian matrix is prohibitive. The objective of this work is to develop a surrogate model for vibrational entropy with linear scaling cost. The key step towards that end is a spatial decomposition into local {\em site entropy} contributions. 
We adopt the spatial decomposition proposed in \cite{julian2016}, which is closely related to the one used for defining site energies in the tight-binding model~\cite{chen16}.


To establish the aforementioned spatial decomposition, we employ a self-adjoint operator \(\textbf{F}_N : \mathcal{W}^{\text{per}}_N \rightarrow \mathcal{W}^{\text{per}}_N\) which acts as $(H^{\text{hom}}_N)^{-1/2}$. The existence and properties of $\textbf{F}_N$ has been previously established in {\cite[Lemma 2.5]{julian2016}}.

We first rewrite the entropy difference \eqref{eq:our_def} as 
\begin{eqnarray} \label{eq:SN_trace}
\mathcal{S}_N (u) = -\frac{1}{2} \operatorname{Trace}\, \log^+{(\textbf{F}_N H_N (u) \textbf{F}_N)},
\end{eqnarray}
where $\log^+$ is defined as follows: 
\begin{figure}[!htb]
     \centering
        \includegraphics[width=0.4\textwidth]{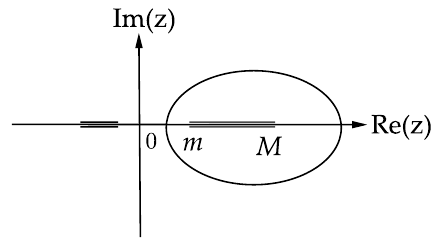}
    \caption{Depiction of the spectrum of $\mathbf{T}$, where $\sigma(\mathbf{T}) \cap (0, \infty) \subset [m,M] $ along with the appropriate contour $\mathcal{C}$.} 
    \label{fig:contour}
\end{figure}
Let $\mathbf{T}$ be a bounded, self-adjoint operator on a Hilbert space with spectrum $\sigma(\mathbf{T}) \subset (-\infty, 0] \cup [m, M]$ where $0 < m \leq M$,  as depicted in Figure \ref{fig:contour}. Then, we can define a contour $\mathcal{C}$ that encircles the interval $[m, M]$ but remains in the right half-plane. Using resolvent calculus we can now define (cf. \cite{julian2016} for more details)
\begin{equation} \label{contour3}
\log^{+} \mathbf{T}= \frac{1}{2 \pi i} \oint_{\mathcal{C}} \log (z)\cdot(z\mathbf{I}-\mathbf{T})^{-1} \,{\rm d} z.
\end{equation}
The trace operation in \eqref{eq:SN_trace} is interpreted as a spatial decomposition from~\cite{julian2016}, allowing us to write 
\begin{align}
\label{eq:SN_spatial_decomp}
\mathcal{S}_N(u) &= \sum_{\ell \in \Lambda_N} \mathcal{S}_{N,\ell}(u), \qquad  \text{where}  \\ 
\label{eq:site_entropy}
   \mathcal{S}_{N,\ell}(u) &:= -\frac{1}{2}\operatorname{Trace}
   \big[\log^+(\textbf{F}_N H_N(u) \textbf{F}_N)\big]_{\ell\ell},
\end{align}
where $[L]_{\ell\ell}$ denotes the $3 \times 3$ block of $L$ corresponding to an atomic site $\ell \in \Lambda$. The site entropy $\mathcal{S}_{N, \ell}$ will be our central object of study. 


\subsection{Thermodynamic limit}
\label{sec:sub:tlim}
While our computational investigations will be for the supercell model, the analysis is more convenient to perform in infinite lattice limit. We therefore review relevant results from \cite{julian2016, 2018-uniform, bcshap2016}.

We now consider displacement fields that are either compactly supported or of finite energy, characterized by the function spaces
\begin{align*}
\dot{\mathcal{W}}^{\rm c} &:= \left\{ u : \Lambda \rightarrow \mathbb{R}^m \mid \text{supp}(Du) \text{ is compact} \right\}, \quad \text{and} \\
\dot{\mathcal{W}}^{1,2} &:= \left\{ u : \Lambda \rightarrow \mathbb{R}^m \mid \|Du\|_{\ell^2} < \infty \right\},
\end{align*}
where 
\begin{equation}
|Du(\ell)|^2 = \sum_{\rho \in R_{\ell}} |D_{\rho}u(\ell)|^2, \qquad
\|Du\|_{\ell^2(\Lambda)} = \left(\sum_{\ell \in \Lambda} |Du(\ell)|^2 \right)^{1/2}.
\end{equation}
The above expressions define a semi-norm for both $\dot{\mathcal{W}}^c$ and $\dot{\mathcal{W}}^{1,2}$ spaces.
The energy functionals for the homogeneous and defective lattice are defined respectively as 
\begin{align}
\E_{\text{hom}}(u) &= \sum_{\ell \in \Lambda} V\big(Du(\ell)\big), \\
\E(u) &= \sum_{\ell \in \Lambda} V_{\ell}\big(Du(\ell)\big) \quad \text{for } u \in \dot{\mathcal{W}}^c.
\end{align}
We now consider the equilibrium configurations, $\delta \E(\bar{u}) = 0$ which can equivalently be written as
\begin{equation} \label{eq:variational_inf}
    \langle \delta \E(\bar{u}) , v\rangle = 0.
\end{equation}

\begin{theorem}[Thermodynamic Limit]{\cite[Theorem 2.1]{2018-uniform}} 
    \label{thm:thermo_limit}
    Let $\bar{u} \in \dot{\mathcal{W}}^{1,2}$ be a stable solution to \eqref{eq:variational_inf}. For $N$ sufficiently large, there exists a locally unique solution to \eqref{eq:opt} such that the following estimates hold:
    \begin{align}
        \|D\bar{u}_N - D\bar{u}\|_{\ell^{\infty}(\Lambda_N)} &\leq CN^{-d}, \\
        \|D\bar{u}_N - D\bar{u}\|_{\ell^{2}(\Lambda_N)} &\leq N^{-d/2}, \\
        |\mathcal{E}_N(\bar{u}_N) - \mathcal{E}(\bar{u})| &\leq N^{-d}.
    \end{align}
    Furthermore,
    the thermodynamic limit of $\mathcal{S}_N$ as $N \to \infty$, denoted by $\mathcal{S}$, exists. The error in approximating $\mathcal{S}(\bar{u})$ with $\mathcal{S}_N(\bar{u}_N)$ satisfies {\cite[Theorem 2.3]{julian2016}} 
    \begin{equation} \label{julian_err_S}
        \left|\mathcal{S}(\bar{u}) - \mathcal{S}_N(\bar{u}_N)\right| \lesssim N^{-d} \log^5(N).
    \end{equation}
\end{theorem}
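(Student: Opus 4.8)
The plan is to establish the statement in two essentially independent pieces: the convergence of the equilibrium geometry (the three estimates involving $D\bar u_N$ and $\mathcal E_N$), which is \cite[Theorem 2.1]{2018-uniform}, and the convergence of the vibrational entropy \eqref{julian_err_S}, which is \cite[Theorem 2.3]{julian2016}. Each is proved by a consistency--stability scheme carried out in the appropriate function-space setting: $\dot{\mathcal{W}}^{1,2}$ for the geometry, and a trace-class-perturbation setting for the entropy.

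\emph{Geometry.} First I would record the decay of the infinite-lattice equilibrium: since $\bar u\in\dot{\mathcal{W}}^{1,2}$ is stable and solves \eqref{eq:variational_inf}, the lattice-Green's-function / Cauchy--Born analysis of \cite{bcshap2016, 2021-defectexpansion} gives $|D_{\boldsymbol\rho}\bar u(\ell)|\lesssim (1+|\ell|)^{-d-j+1}$ for a $j$-th order difference; in particular $\|D\bar u\|_{\ell^2(\Lambda\setminus B_R)}\lesssim R^{-d/2}$ and $|D\bar u(\ell)|\lesssim(1+|\ell|)^{-d}$. Next I would view the supercell problem \eqref{eq:opt} as a perturbation of \eqref{eq:variational_inf}: take a smoothly truncated periodisation $T_N\bar u\in\mathcal{W}^{\rm per}_N$, agreeing with $\bar u$ on $B_{cN}$, as a trial state and bound its residual $\|\delta\mathcal E_N(T_N\bar u)\|$ in the dual norm by the discarded tail and the cutoff layer, which yields $\lesssim N^{-d}$. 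Stability of $\bar u$ transfers to uniform positivity of $H_N(\bar u_N)$ for $N$ large (via locality/Combes--Thomas estimates for the finite-range operator $H(\bar u)-H^{\rm hom}$ together with the geometry estimate, closing a fixed-point loop), so the inverse function theorem produces a locally unique $\bar u_N$ with $\|D\bar u_N - DT_N\bar u\|_{\ell^2}\lesssim N^{-d}$; combined with the tail bound this gives the $\ell^2$ estimate $N^{-d/2}$. Near the core $T_N\bar u=\bar u$, and the interior correction $\bar u_N - T_N\bar u$ solves a lattice equation whose right-hand side is localised near $\partial\Lambda_N$, so pointwise Green's-function decay upgrades the estimate there to $N^{-d}$, while far from the core $D\bar u$ is itself $\lesssim N^{-d}$; hence the $\ell^\infty$ bound. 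Finally $\mathcal E_N$ is quadratic to leading order about its minimiser, so $|\mathcal E_N(\bar u_N)-\mathcal E_N(T_N\bar u)|\lesssim\|D\bar u_N - DT_N\bar u\|_{\ell^2}^2\lesssim N^{-2d}$, whereas $|\mathcal E_N(T_N\bar u)-\mathcal E(\bar u)|\lesssim N^{-d}$ by truncation, giving the energy estimate.

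\emph{Entropy.} For \eqref{julian_err_S} I would first construct the limiting object: on the infinite lattice set $\mathbf F:=(H^{\rm hom})^{-1/2}$ via the spectral theorem -- legitimate because $H^{\rm hom}$ is bounded, translation-invariant, finite-range, and stable, so $\sigma(H^{\rm hom})\subset\{0\}\cup[c_0,C_0]$ -- and put $\mathbf T:=\mathbf F H(\bar u)\mathbf F$ and $\mathbf T_N:=\mathbf F_N H_N(\bar u_N)\mathbf F_N$. The structural point is that $H(\bar u)-H^{\rm hom}$ is finite-range and localised near the defect and $\bar u\in\dot{\mathcal{W}}^{1,2}$, so by Combes--Thomas $\mathbf T-\mathbf I$ has off-diagonal (stretched-)exponential decay and, crucially, $\log^+\mathbf T-\mathbf I$ is trace-class; hence $\mathcal S(\bar u):=-\tfrac12\operatorname{Trace}\log^+\mathbf T$ is well defined and still admits the site decomposition \eqref{eq:SN_spatial_decomp}, with the site terms decaying away from the core. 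The error $\mathcal S(\bar u)-\mathcal S_N(\bar u_N)$ splits into (i) replacing $\mathbf T_N$ by the periodisation of $\mathbf T$ -- controlled near the core, where the site entropy is not yet negligible, by $\|D\bar u_N - D\bar u\|_{\ell^\infty}\lesssim N^{-d}$, and far from it by the entry decay -- and (ii) the difference between the periodic and infinite-lattice traces of $\log^+$ of the \emph{same} operator, a pure domain-truncation error for a trace-class-perturbed operator, again $\lesssim N^{-d}$ up to logarithms. Both steps run through the resolvent representation \eqref{contour3}: write $\log^+\mathbf T-\log^+\mathbf T_N$ as $\tfrac1{2\pi i}\oint_{\mathcal C}\log(z)\,(z\mathbf I-\mathbf T)^{-1}(\mathbf T-\mathbf T_N)(z\mathbf I-\mathbf T_N)^{-1}\,{\rm d}z$, take the $\ell\ell$-block trace, and sum over $\ell\in\Lambda_N$. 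The factor $\log^5 N$ accumulates from the Combes--Thomas resolvent constants (which degrade logarithmically as $z$ approaches $\sigma$ along $\mathcal C$), the number of lattice sites in the relevant region, and the decay tails; tracking its precise power is bookkeeping.

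\emph{Main obstacle.} I expect step (ii) of the entropy argument to be the delicate part: $\mathbf T_N$ and $\mathbf T$ act on different spaces, and $\log^+$ is defined only through the contour integral \eqref{contour3}, so one must estimate the resolvents $(z\mathbf I-\mathbf T_N)^{-1}$ and $(z\mathbf I-\mathbf T)^{-1}$ uniformly in $N$ and uniformly along $\mathcal C$, and verify that the operator differences entering the trace are genuinely trace-class rather than merely Hilbert--Schmidt, so that the block traces sum to a quantity of the advertised order. Pinning down the exact power $\log^5 N$, rather than some larger power, is where the real care is needed; the geometry part and the abstract inverse-function-theorem machinery are, by comparison, routine.
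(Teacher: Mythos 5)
You should first note that the paper itself contains no proof of this statement: Theorem~\ref{thm:thermo_limit} is imported verbatim from \cite[Theorem 2.1]{2018-uniform} (the three geometry estimates) and \cite[Theorem 2.3]{julian2016} (the entropy estimate \eqref{julian_err_S}), and the surrounding text only reviews it. So the comparison is against the cited proofs, whose architecture your sketch does reproduce: a consistency--stability argument with a truncated/periodised trial state and the inverse function theorem for the geometry, and a resolvent-calculus, defect-localised-perturbation argument for the entropy. The geometry paragraph is essentially correct as a summary of that route, including the source of the three different rates.

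The entropy paragraph contains one step that would fail as written. You claim that, by Combes--Thomas, $\mathbf T-\mathbf I=\mathbf F\big(H(\bar u)-H^{\rm hom}\big)\mathbf F$ has (stretched-)exponential off-diagonal decay and that the resolvent constants along $\mathcal C$ are controlled the same way. But $\mathbf F=(H^{\rm hom})^{-1/2}$ is not a finite-range operator and $0\in\sigma(H^{\rm hom})$ (the acoustic branch), so its kernel decays only algebraically, $|D_{\boldsymbol\rho}F(\ell)|\lesssim(1+|\ell|)^{1-d-j}$ (Lemma~\ref{th:properties_F} in the appendix, quoted from \cite[Lemma 3.1]{julian2016}); consequently $\mathbf F^*H(u)\mathbf F$ is itself of infinite range and Combes--Thomas does not apply to its resolvent. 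The correct substitute is the finite-rank-update argument of Lemma~\ref{th:resolvent estimate}, which yields only the algebraic bound $|[\mathscr R_z]_{mn}|\lesssim(1+|r_{mn}|)^{-d}$. This is not cosmetic: with exponential decay the trace-class and summability questions you defer to ``bookkeeping'' would indeed be trivial and no logarithmic loss would appear, whereas the borderline algebraic decay $(1+|r|)^{-d}$ is precisely why the summation of the $\ell\ell$-blocks over $\Lambda_N$ is delicate, why the renormalisation against $H^{\rm hom}_N$ is needed, and where the $\log^5 N$ factor in \eqref{julian_err_S} actually originates. Your identification of step (ii) as the delicate point is right, but the tool you propose for it is the wrong one; replacing Combes--Thomas by the algebraic Green's-function and resolvent estimates repairs the argument and brings it in line with \cite{julian2016}.
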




Additionally, in~\cite[Lemma 3.2]{julian2016}, it is shown that there exist $\underline{\sigma}, \bar{\sigma}, \varepsilon >0 $ and a contour $\mathcal{C}$ encircling $[\underline{\sigma}, \bar{\sigma}]$ but not the origin, such that, if some $u_\infty \in \dot{\mathcal{W}}^{1,2}$ satisfies $\sigma \big ({\rm \bF}^{*} H(u_\infty)\bF \big) \cap (-\underline\sigma, \infty) \subset [2\underline\sigma, \overline\sigma/2]$ for some $0 < \underline\sigma < \overline\sigma$, then for all $u\in B_{\varepsilon}(u_{\infty})$, the operator $\pmb{\rm F}^{*}  H(u) \pmb{\rm F} $ remains uniformly bounded above and below and 
\begin{align}
\log^{+}\big[{\rm \bF}^{*}H(u)\bF\big] =&~\frac{1}{2 \pi i} \oint_{\mathcal{C}} \log (z)\cdot \mathscr{R}_z(u)\,{\rm d}z, \nonumber \\
\textrm{where} ~\mathscr{R}_z = \mathscr{R}_z(u) :=&~(zI- \pmb{\rm F}^{*}  H(u) \pmb{\rm F})^{-1}.
    \label{eq:resolvent}
\end{align}
From now on, we will fix this contour and always have $z\in\mathcal{C}$. We will also express the limit quantity for each site entropy $S_{N,\ell}$ using the generalized notation
\begin{eqnarray}\label{eq:local_S+}
\mathcal{S}_{\ell}(u) := -\frac{1}{2}\operatorname{Trace}\big[\log^+(\textbf{F}^*H(u)\textbf{F})\big]_{\ell\ell}\quad\textrm{with}\quad 
\textbf{F} := (H^{\text{hom}})^{-1/2} \in \mathcal{L}(\ell^2, \dot{\mathcal{W}}^{1,2})
\end{eqnarray}
where $(H^{\text{hom}})^{-1/2}$ denotes the square root of the pseudo-inverse of $H^{\text{hom}}$. For a more detailed discussion on the existence of $\textbf{F}$ and a rigorous definition using Fourier transform see Appendix~\ref{sec: appendix_prelem}. 


The homogeneous site entropy can be similarly defined by
\begin{equation}
\mathcal{S}^{\text{hom}}_\ell(u) := -\frac{1}{2}\operatorname{Trace}\big[\log^+(\textbf{F}^*H^{\text{hom}}(u)\textbf{F})\big]_{\ell\ell} .
\end{equation}
Following the arguments in the proof of \cite[Proposition 5.8]{julian2016}, the existence of a similar limit quantity for $\mathcal{S}_{N, \ell}(\bar{u}_N)$, namely $\mathcal{S}_\ell(\bar{u})$ can be deduced. Thus, instead of analyzing the supercell approximation of site entropy $\mathcal{S}_{N, \ell}(u)$, we will give a locality estimate of $\mathcal{S}_{\ell}(u)$ (cf.~Theorem~\ref{thm:local}) in the subsequent section.


\section{Locality of Site Entropy}
\label{sec:th_rs}
We begin by thoroughly analyzing the locality of site entropy, establishing a foundation for its subsequent impact on the error analysis arising from the truncation of the interaction range in Section~\ref{sec:sub:local}. To substantiate our theoretical results, a numerical validation of the locality estimate is presented in Section~\ref{sec:sub:numer}.

\subsection{Locality Estimates for $\mathcal{S}_\ell$}
\label{sec:sub:local}

It is shown in~\cite{julian2016} that each individual entropy $\mathcal{S}_\ell$ has only a small dependence on distant atomic sites. A concrete representation of this decay is provided by the formal estimate
\begin{equation}
\left| \frac{\partial \mathcal{S}_\ell(\bar{u})}{\partial Du_n} - \frac{\partial \mathcal{S}^{\text{hom}}_\ell({\bf 0})}{\partial Du_n} \right| \lesssim |\ell - n|^{-2d}|n|^{-d} + \text{higher order terms}.
\label{eq:julian_locality}
\end{equation}
This equation provides an initial glimpse into the localized characteristics of $\mathcal{S}_{\ell}$ and sheds light on why one might anticipate effective control over its summation across $\ell$. A more precise estimate for this locality will be established in the remainder of the section, building on the machinery developed in \cite{julian2016}. Our goal in this section is to show that site entropies have a locality property.


\begin{theorem}\label{thm:local}{\rm\textbf{(Locality)}}
Given $u \in \dot{\mathcal{W}}^{1,2}$, let the site entropy $\mathcal{S}_{\ell}(u)$ be defined by~\eqref{eq:local_S+}.  Then there exist a constant $C_2>0$ such that, for $\ell, n\in \Lambda$, and for $r_{n\ell}$ sufficiently large, 
\begin{align}
    \left| \frac{\partial \mathcal{S}_\ell(u) }{\partial u_n} \right|  \leq C_2 \big|r_{n\ell}\big|^{-2d},
\end{align}
where $r_{n \ell}$ represents the distance between atoms $n$ and $\ell$. 
\end{theorem}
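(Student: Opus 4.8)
The plan is to represent $\mathcal{S}_\ell(u)$ via the contour-integral formula \eqref{eq:resolvent} and to differentiate under the integral sign, reducing the locality estimate for $\partial \mathcal{S}_\ell / \partial u_n$ to a decay estimate for the matrix entries of the resolvent $\mathscr{R}_z(u) = (zI - \bF^* H(u)\bF)^{-1}$ and its derivative. Concretely, since $\mathcal{S}_\ell(u) = -\frac{1}{2}\operatorname{Trace}[\log^+(\bF^* H(u)\bF)]_{\ell\ell}$, differentiating in the $u_n$ direction and using Cauchy's formula gives
\begin{equation}
\frac{\partial \mathcal{S}_\ell(u)}{\partial u_n} = -\frac{1}{4\pi i}\oint_{\mathcal{C}} \log(z)\,\operatorname{Trace}\Big[\mathscr{R}_z(u)\,\bF^*\frac{\partial H(u)}{\partial u_n}\,\bF\,\mathscr{R}_z(u)\Big]_{\ell\ell}\,{\rm d}z,
\label{eq:plan_diff}
\end{equation}
where I used $\partial_{u_n}(A^{-1}) = -A^{-1}(\partial_{u_n}A)A^{-1}$. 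The operator $\partial H(u)/\partial u_n$ is local: because $H(u) = \delta^2\E(u)$ comes from a finite-range site potential, $[\partial H/\partial u_n]_{\ell' n'}$ is supported on $\ell', n'$ within a fixed finite distance of $n$. So the $\ell\ell$-block of the integrand is a sum over such near-$n$ sites of products $[\mathscr{R}_z]_{\ell \cdot}[\bF^* H_{,n}\bF]_{\cdot\cdot}[\mathscr{R}_z]_{\cdot\ell}$, and everything reduces to how fast $[\mathscr{R}_z \bF^* (\cdot) \bF \mathscr{R}_z]$ decays as the "source" index $n$ moves away from $\ell$.

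The key ingredient I would invoke is the exponential (or at least fast polynomial) off-diagonal decay of the resolvent $\mathscr{R}_z(u)$, uniformly for $z \in \mathcal{C}$, which follows from the Combes–Thomas type estimates already used in \cite{julian2016}: since $\bF^* H(u)\bF$ is a bounded operator that is "local up to $\bF$-conjugation" and $z$ stays a fixed distance from its spectrum, one gets $|[\mathscr{R}_z(u)]_{\ell n}| \lesssim e^{-\gamma|r_{n\ell}|}$ for some $\gamma>0$ depending only on the distance from $\mathcal{C}$ to $\sigma(\bF^* H(u)\bF)$. The catch is that $\bF = (H^{\rm hom})^{-1/2}$ is \emph{not} local — it has only polynomial decay, $|[\bF]_{\ell n}| \lesssim |r_{n\ell}|^{-d}$ (as already reflected in the $|n|^{-d}$ factor in \eqref{eq:julian_locality} and discussed in \cite{julian2016}). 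Thus $\bF^* H_{,n}\bF$ has a "tail" of order $|r_{\cdot n}|^{-d}$ emanating from the localized core near $n$. Composing one slowly-decaying $\bF$-tail from the left and one from the right with the exponentially decaying resolvents, and summing the convolution $\sum_{k}|r_{\ell k}|^{-d} e^{-\gamma|r_{k \cdot}|}$ type expressions, the exponential factors localize the sum so that the net decay between $\ell$ and $n$ is governed by the product of the two polynomial tails: one $\bF$-tail from $\ell$-side contributing $|r_{n\ell}|^{-d}$ and one from the $n$-side, giving the claimed $|r_{n\ell}|^{-2d}$. One must be slightly careful that the resolvents act between $\ell^2$ and $\dot{\mathcal{W}}^{1,2}$ (so $\bF \in \mathcal{L}(\ell^2,\dot{\mathcal{W}}^{1,2})$), but the pointwise decay bounds on matrix blocks are what matter here, and the $\log(z)$ factor and the length of $\mathcal{C}$ are harmless bounded multipliers.

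I expect the main obstacle to be making the "two polynomial tails multiply, exponential factors just localize" heuristic rigorous: this is a convolution estimate of the form
\begin{equation}
\sum_{k_1, k_2 \in \Lambda} |r_{\ell k_1}|^{-d}\, e^{-\gamma |r_{k_1 k_2}|}\, \mathbf{1}_{|k_2 - n| \le R_0}\, |r_{k_2 \ell'}|^{-d}\, e^{-\gamma|r_{\ell' \ell}|} \;\lesssim\; |r_{n\ell}|^{-2d},
\end{equation}
and one has to track which endpoint each polynomial tail is "anchored" to, handle the region $|r_{\ell k_1}|$ small vs. large, and confirm that the worst case indeed produces $|r_{n\ell}|^{-2d}$ rather than, say, $|r_{n\ell}|^{-d}$ times a log. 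I would organize this as a lemma on summations of products of polynomially- and exponentially-decaying kernels on $\Lambda$ (of the type $\sum_k |x-k|^{-a} e^{-\gamma|k-y|} \lesssim |x-y|^{-a}$ for $|x-y|$ large), apply it twice, and then bound the contour integral in \eqref{eq:plan_diff} by the sup over $z\in\mathcal{C}$ times $|\mathcal{C}|$. The "higher order terms" in \eqref{eq:julian_locality} correspond to the fact that I am differentiating once and linearizing the resolvent once; since the statement of Theorem~\ref{thm:local} only asks for the first derivative, a single application of the resolvent identity suffices and no Neumann-series bookkeeping beyond first order is needed. Finally, the hypothesis "$r_{n\ell}$ sufficiently large" is exactly what is needed for the convolution lemma to kick in and to absorb the bounded near-diagonal contributions of $\partial H/\partial u_n$ into the constant $C_2$.
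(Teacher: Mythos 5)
Your overall architecture matches the paper's proof: differentiate the contour-integral representation, apply the resolvent identity to obtain the integrand $\mathscr{R}_z\,\mathbf{F}^*\,\partial_{u_n}H(u)\,\mathbf{F}\,\mathscr{R}_z$, exploit the finite range of $\partial_{u_n}H(u)$ together with the algebraic decay of $DF$, and close with a convolution estimate. The genuine gap is in your key ingredient: the claimed exponential (Combes--Thomas) off-diagonal decay of $\mathscr{R}_z(u)=(zI-\mathbf{F}^*H(u)\mathbf{F})^{-1}$ is not available here, and is in fact false. Combes--Thomas requires the operator itself to have exponentially decaying (or finite-range) off-diagonal entries; but, as you yourself observe, $\mathbf{F}=(H^{\rm hom})^{-1/2}$ has only polynomial decay ($|D_{\boldsymbol\rho}F(\ell)|\lesssim(1+|\ell|)^{-d}$, Lemma~\ref{th:properties_F}), so the conjugated operator $\mathbf{F}^*H(u)\mathbf{F}$ has matrix blocks decaying only like $|r_{mn}|^{-d}$ and its resolvent cannot decay faster. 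You confine the non-locality of $\mathbf{F}$ to the middle factor $\mathbf{F}^*\partial_{u_n}H\,\mathbf{F}$ while keeping exponential decay for the two outer resolvents; the same non-locality infects the resolvents themselves. The estimate that is actually available --- and that the paper establishes in Lemma~\ref{th:resolvent estimate} via a finite-rank perturbation of the far-field Hessian combined with the finite-rank inverse formula, not via Combes--Thomas --- is only $|[\mathscr{R}_z(u)]_{mn}|\leq C(1+|r_{mn}|)^{-d}$, uniformly for $z\in\mathcal{C}$.

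This changes the combinatorics of your closing convolution lemma in an essential way. Instead of ``exponential factors localize, polynomial tails multiply,'' one must bound
\begin{equation*}
\sum_{\alpha,\beta\in\Lambda}(1+|\alpha-\ell|)^{-d}(1+|n-\alpha|)^{-d}(1+|n-\beta|)^{-d}(1+|\beta-\ell|)^{-d}
=\Bigl(\sum_{\alpha\in\Lambda}(1+|\alpha-\ell|)^{-d}(1+|n-\alpha|)^{-d}\Bigr)^{2},
\end{equation*}
the square of a convolution of two borderline-summable kernels. This is genuinely more delicate than your proposed lemma $\sum_k|x-k|^{-a}e^{-\gamma|k-y|}\lesssim|x-y|^{-a}$: the inner sum generically picks up a logarithmic factor, which must either be absorbed into the statement or avoided by working more carefully at the level of $D\mathbf{F}$, as in \cite{julian2016}. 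So although your final exponent $-2d$ agrees with the theorem, the mechanism by which you reach it does not go through as written; the Combes--Thomas input must be replaced by the polynomial resolvent bound and the summation estimate redone accordingly.
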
 \\

\medskip 

Theorem \ref{thm:local} reveals the locality principle of site entropy, which indicates that the sensitivity of site entropy to displacements decreases algebraically as the distance $r_{n\ell}$ grows. We will numerically validate this theorem in Section~\ref{sec:sub:numer}. Owing to this locality, the accurate approximations of $\mathcal{S}^+_{\ell}(u)$ can be achieved by limiting the group of atoms with index $j$ to a certain vicinity around atom $\ell$, i.e., $r_{\ell j} < r_{\text{cut}}$. More precisely, given the site $\ell \in \Lambda$, we define the $\textit{truncated site entropy}$ $\widetilde{\mathcal{S}}^+_\ell(u)$ by 

\begin{equation}
\tilde{\mathcal{S}}_\ell(u) = \mathcal{S}_\ell( T_{r_{\rm cut}} u) \quad \text{where} \quad 
T_{r_{\rm cut}} u(k) := 
\begin{cases} 
u_k & \text{for } k \in \Lambda \cap B_{r_{\rm cut}}(\ell) \\
\pmb{0} & \text{otherwise}
\end{cases}. \label{eq:truncated_def}
\end{equation}

The subsequent theorem suggests that the truncated site entropy $\widetilde{\mathcal{S}}_{\ell}$ defined in \eqref{eq:truncated_def}, serves as a reliable approximation to the non-truncated entropy $\mathcal{S}_{\ell}$. We give the complete proof of this theorem in the Appendix \ref{sec: appendix_prrof} and will be numerically verified in Section \ref{LR}.

\begin{theorem}\label{thm:err_local}
Let $\widetilde{\mathcal{S}}_\ell({u};r_{\rm cut})$ defined by \eqref{eq:truncated_def} for $u \in \dot{\mathcal{W}}^{1,2}$, denote the truncated site entropy for atomic site $\ell$. Then there exists a constant $C_3>0$ such that for $r_{\rm cut}$ sufficiently large, 
\begin{align}
    \big| \mathcal{S}_\ell(u) - \widetilde{\mathcal{S}}_\ell(u ; r_{\rm cut}) \big| \leq
     C_3| r_{\rm cut} |^{-d} \cdot \| u \|_{L^{\infty}}. 
\end{align}
\end{theorem}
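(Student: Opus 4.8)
The plan is to estimate the difference $\mathcal{S}_\ell(u) - \widetilde{\mathcal{S}}_\ell(u;r_{\rm cut}) = \mathcal{S}_\ell(u) - \mathcal{S}_\ell(T_{r_{\rm cut}}u)$ by viewing $\mathcal{S}_\ell$ as a function of the displacement field and integrating along the segment from $T_{r_{\rm cut}}u$ to $u$. Concretely, I would write
\begin{equation*}
\mathcal{S}_\ell(u) - \mathcal{S}_\ell(T_{r_{\rm cut}}u)
= \int_0^1 \sum_{n \in \Lambda} \frac{\partial \mathcal{S}_\ell}{\partial u_n}\big(u^{(t)}\big) \cdot \big(u_n - (T_{r_{\rm cut}}u)_n\big)\,{\rm d}t,
\end{equation*}
where $u^{(t)} := T_{r_{\rm cut}}u + t(u - T_{r_{\rm cut}}u)$. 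The key observation is that $u_n - (T_{r_{\rm cut}}u)_n = 0$ for $n \in \Lambda \cap B_{r_{\rm cut}}(\ell)$, so the sum runs only over sites $n$ with $r_{n\ell} \geq r_{\rm cut}$. For each such $n$, Theorem~\ref{thm:local} gives $\big|\partial \mathcal{S}_\ell(u^{(t)})/\partial u_n\big| \leq C_2 |r_{n\ell}|^{-2d}$, provided $r_{\rm cut}$ is large enough that $r_{n\ell}$ is in the regime where the locality estimate applies, and provided the intermediate configurations $u^{(t)}$ stay within the neighbourhood $B_\varepsilon(u_\infty)$ on which the resolvent representation~\eqref{eq:resolvent} and hence the locality bound are valid.

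Granting this, I would bound
\begin{equation*}
\big| \mathcal{S}_\ell(u) - \widetilde{\mathcal{S}}_\ell(u;r_{\rm cut}) \big|
\leq \|u\|_{L^\infty} \sum_{n \in \Lambda,\, r_{n\ell} \geq r_{\rm cut}} C_2 |r_{n\ell}|^{-2d}.
\end{equation*}
The remaining tail sum is a standard lattice estimate: comparing with the integral $\int_{r_{\rm cut}}^\infty \rho^{-2d}\rho^{d-1}\,{\rm d}\rho = \int_{r_{\rm cut}}^\infty \rho^{-d-1}\,{\rm d}\rho \sim r_{\rm cut}^{-d}$ yields $\sum_{r_{n\ell}\geq r_{\rm cut}} |r_{n\ell}|^{-2d} \lesssim r_{\rm cut}^{-d}$ (with the implicit constant depending only on $\Lambda$ and $d$). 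Combining the two displays gives the claimed bound $C_3 |r_{\rm cut}|^{-d} \|u\|_{L^\infty}$.

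A few technical points need care. First, the representation of $\mathcal{S}_\ell$ used in Theorem~\ref{thm:local} is in terms of $\partial/\partial u_n$, but one should check that the relevant quantity is really the derivative with respect to the displacement at site $n$ as opposed to the discrete gradient $Du_n$; since $\|u\|_{L^\infty}$ (rather than $\|Du\|$) appears on the right-hand side, working directly with $\partial/\partial u_n$ is the natural choice, and the chain of identities $\mathcal{S}_\ell(u) = -\tfrac12 \operatorname{Trace}[\log^+(\bF^* H(u)\bF)]_{\ell\ell}$ depends on $u$ through $H(u)$, which in turn depends on the $Du(\ell)$; one must track this dependence carefully. Second, one must ensure the truncation does not move the configuration out of the admissible set where $H(u)$ has the required spectral gap — this is where I expect the main obstacle to lie: the segment $u^{(t)}$ must remain in $B_\varepsilon(u_\infty)$ so that~\eqref{eq:resolvent} holds uniformly, and one should argue (using $\|Du - DT_{r_{\rm cut}}u\|_{\ell^2}$ small for large $r_{\rm cut}$, which follows from $\bar u \in \dot{\mathcal{W}}^{1,2}$) that this is automatic once $r_{\rm cut}$ is large. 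Finally, one must make sure the constant $C_2$ in Theorem~\ref{thm:local} can be chosen uniformly over all $u^{(t)}$ in this neighbourhood and over the site indices involved; this is exactly the kind of uniformity that the cited results from~\cite{julian2016} (the uniform resolvent bounds following~\eqref{eq:resolvent}) are designed to provide, so I would invoke them at this step. Assembling these pieces gives the theorem.
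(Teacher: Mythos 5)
Your proposal is correct and follows essentially the same route as the paper: both reduce the truncation error to a sum of $\big|\partial\mathcal{S}_\ell/\partial u_n\big|$ over the sites $n$ with $r_{n\ell} > r_{\rm cut}$ (weighted by $\|u\|_{L^\infty}$), invoke Theorem~\ref{thm:local} for the $|r_{n\ell}|^{-2d}$ decay, and close with the tail estimate $\sum_{r_{n\ell}\ge r_{\rm cut}}|r_{n\ell}|^{-2d}\lesssim r_{\rm cut}^{-d}$. Your use of the exact integral (mean-value) representation along the segment from $T_{r_{\rm cut}}u$ to $u$ is in fact slightly cleaner than the paper's first-order Taylor expansion about $\mathbf{0}$, which silently discards ``higher order terms,'' and your tail integral is taken over the correct range $(r_{\rm cut},\infty)$, whereas the limits $\int_0^{r_{\rm cut}}$ displayed in the paper's proof are a typo.
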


\medskip 

The two foregoing theorems justify calling \eqref{eq:SN_spatial_decomp} a {\em spatial decomposition}. It further motivates developing a surrogate model for vibrational entropy in terms of a sum of parameterized site entropy contributions. We will explore this further in Section~\ref{sec:num_rs}. In the recent study~\cite{PhysRevMaterials.4.063802} the spatial localization of site entropy was postulated as an {\em assumption} to construct surrogate models for entropy.


\subsection{Numerical validation}
\label{sec:sub:numer}
To validate the locality of site entropy (cf.~Theorem~\ref{thm:local}), we consider a toy model with pairwise interactions \cite{GADES1992232, ZIEGENHAIN20091514}. 
We consider lattice displacements as functions $u: \Lambda \to \mathbb{R}$ and define the energy as:
\begin{equation}\label{eq:toymodel}
E(C; u) = \sum_{i} \left( \sum_{j \in \mathcal{N}_i} C_{ij} |u_i - u_j|^2 + \delta |u_i - u_j|^3 + |u_i - u_j|^4 \right),
\end{equation}
where \( i \) indexes the lattice points, \( \mathcal{N}_i \) denotes the set of nearest neighbors of the \( i \)-th site, \( u_i \) represents the displacement at the \( i \)-th lattice site, \( C_{ij} \) are the coefficients characterizing the interactions between points \( i \) and \( j \), and \( \delta \) is a constant which controls the contribution of the strength of the cubic term. (Qualitatively, this is not a severe restriction of generality, see e.g. \cite[Section 6.2]{julian2016}).

 This form encapsulates the 1D, 2D cases with appropriate definitions of \( \mathcal{N}_i \) for each dimension. When defining the matrices $C$, we only consider nearest neighbour interactions and employed periodic boundary conditions. Bonds are indicated by the interaction coefficient values, with 1 signifying a bond and 0 its absence. To incorporate inhomogenity into our numerical tests, we integrated ``impurities'' into the model by introducing perturbations to the existing elements of the matrix $C$. The inclusion of defects and impurities in the system does not impact the locality property of site entropy.

It is noteworthy that the employment of the toy model here instead of real systems, enables us to easily perform large-scale simulation in which we can most clearly observe the anticipated locality behavior. 



\begin{figure}[!htb]
     \centering
          \begin{subfigure}[b]{0.32\textwidth}
         \centering
         \qquad \qquad \qquad
         \includegraphics[width=\textwidth]{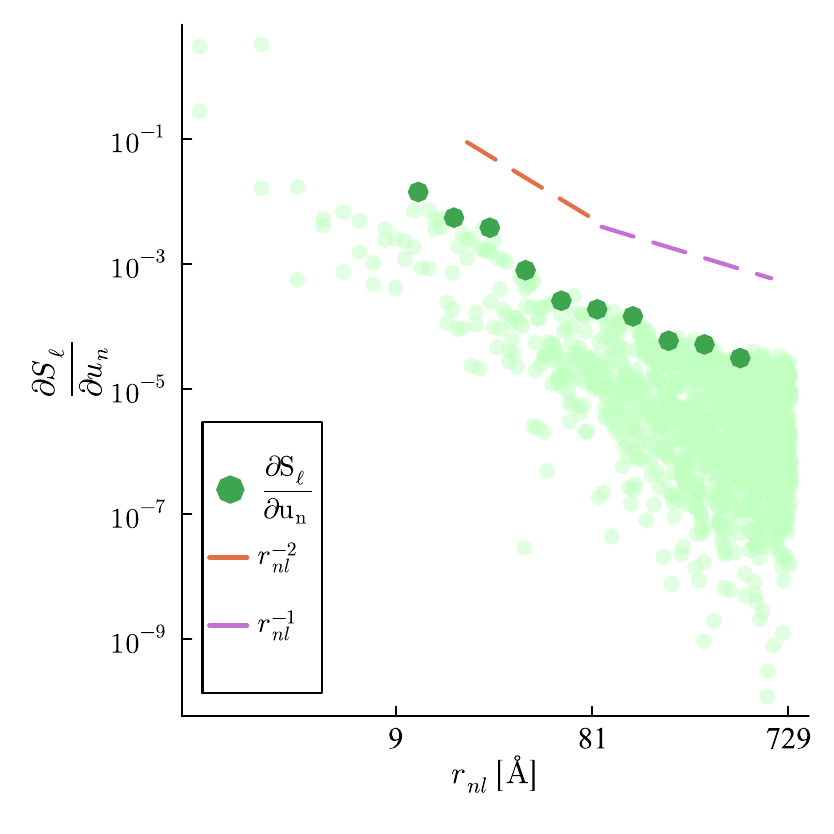}
         \caption{$\delta = 0$}
         \label{fig:delta0.0_1D}
     \end{subfigure}
     \begin{subfigure}[b]{0.32\textwidth}
         \centering
        \includegraphics[width=\textwidth]{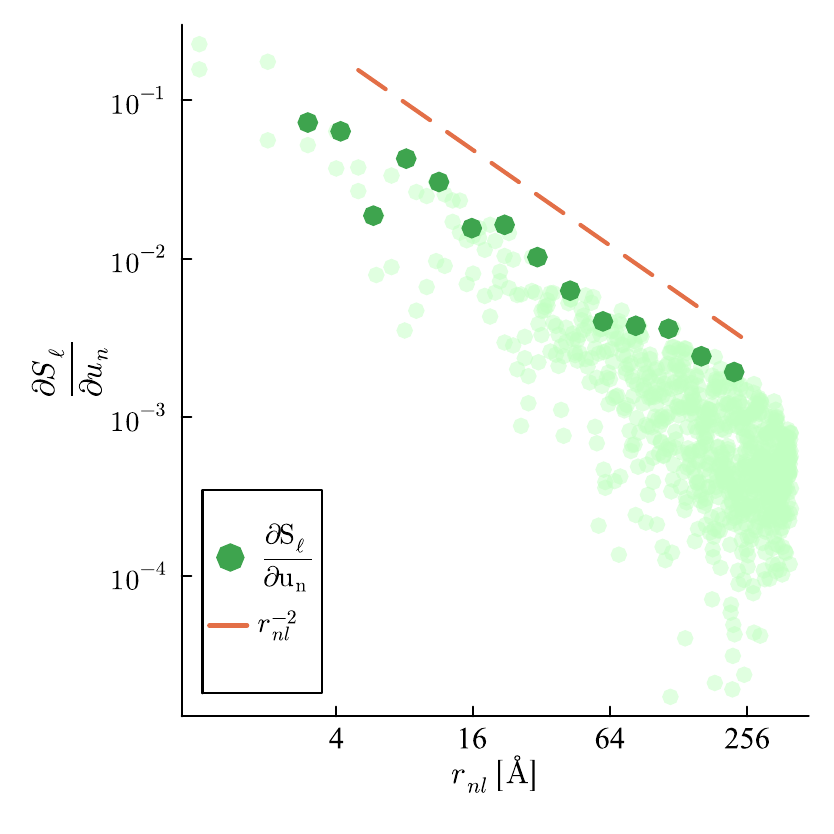}
         \caption{$\delta = 0.1$}
         \label{fig:delta0.1_1D}
     \end{subfigure}
     \begin{subfigure}[b]{0.32\textwidth}
         \centering
         \qquad \qquad \qquad
         \includegraphics[width=\textwidth]{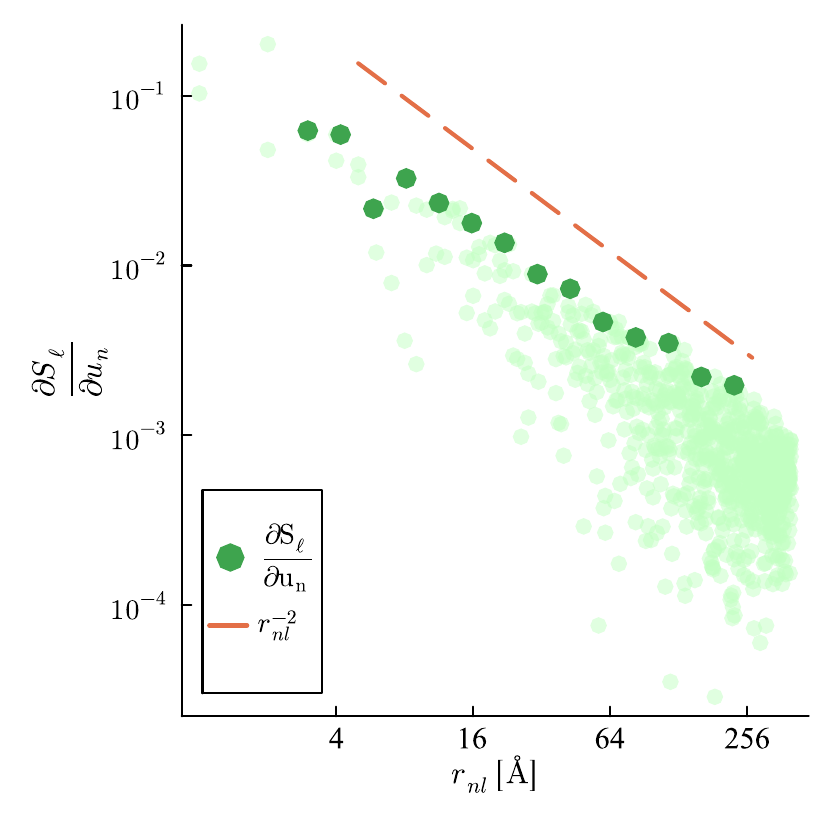}
         \caption{$\delta = 0.5$}
         \label{fig:delta0.5_1D}
     \end{subfigure}
     
    \caption{Derivatives of site entropy for the 1 dimensional toy model including 800 atoms. The dark green dots represent the maximum $\frac{\partial S}{\partial u_n}$ values within specified logarithmic bins along the $x$ axis.}
    \label{fig:1dtoy}
\end{figure}

\begin{figure}[!htb]
     \centering
     \begin{subfigure}[b]{0.32\textwidth}
         \centering
    \includegraphics[width=\textwidth]{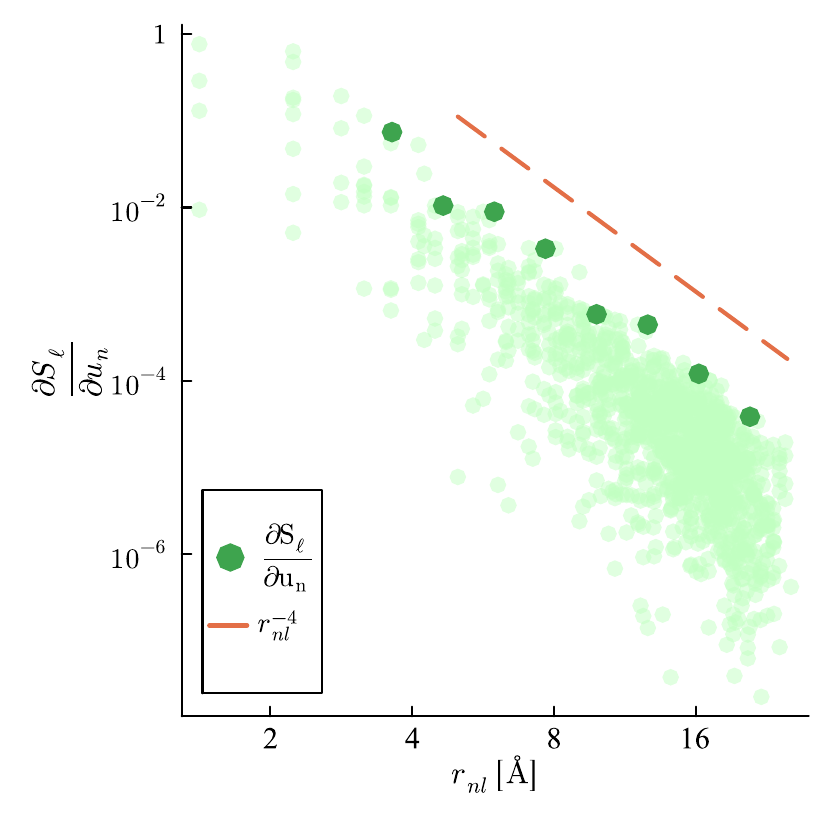}
         \caption{$\delta = 0$}
         \label{fig:delta0_2D}
     \end{subfigure}
     \begin{subfigure}[b]{0.32\textwidth}
         \centering
    \includegraphics[width=\textwidth]{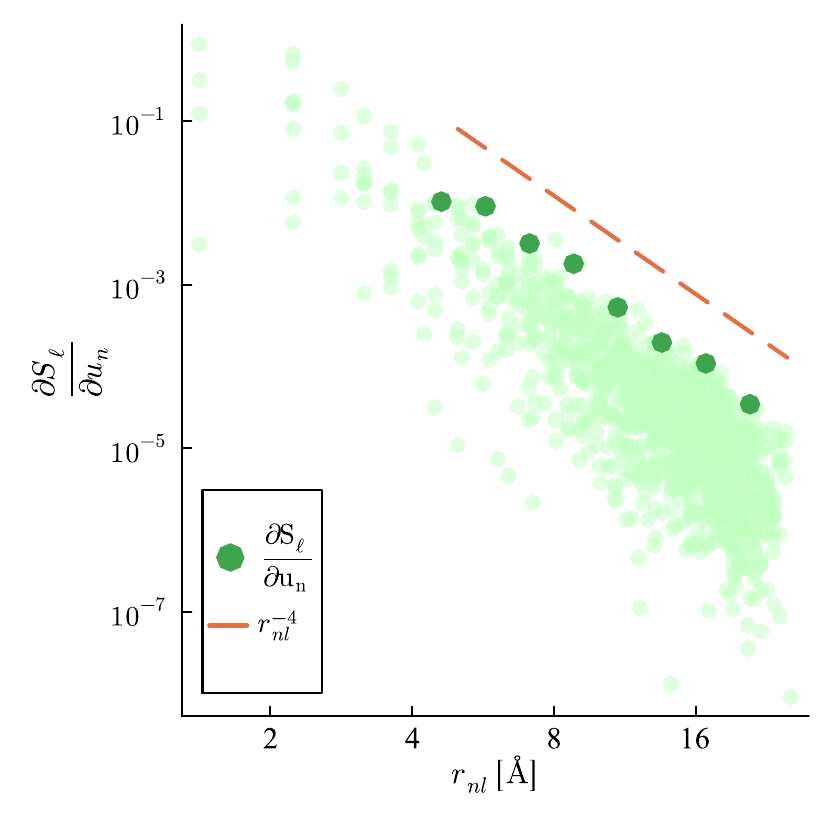}
         \caption{$\delta = 0.1$}
         \label{fig:delta0.1_2D}
     \end{subfigure}
     \begin{subfigure}[b]{0.32\textwidth}
         \centering
         \qquad \qquad \qquad
         \includegraphics[width=\textwidth]{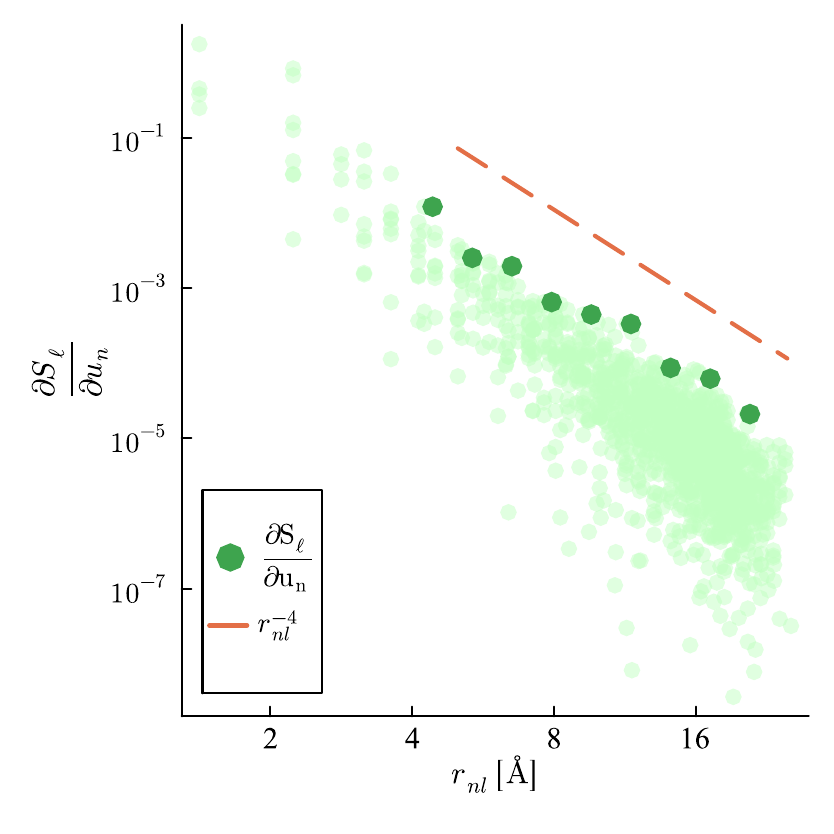} 
         \caption{$\delta = 0.5$}
         \label{fig:delta0.5_2D}
     \end{subfigure}
    \caption{Derivatives of site entropy for the 2 dimensional toy model including 2025 atoms. The dark green dots represent the maximum $\frac{\partial S}{\partial u_n}$  values within specified logarithmic bins along the x axis.}
    \label{fig:2dtoy}
\end{figure}

Following the detailed representations of our models, we conducted numerical evaluations of the derivatives of the site entropy, $ \left| \frac{\partial S_\ell}{\partial u_n} \right| $.
Figures \ref{fig:1dtoy} and \ref{fig:2dtoy}, demonstrate the rate of decay of site entropies.
Our numerical results for the $2D$ model are in good agreement with the predictions made by Theorem~\ref{thm:local}, irrespective of the $\delta$ value. In particular, as we observe larger values of \( |r_{n\ell}| \), the magnitude of \( \left| \frac{\partial \mathcal{S}_\ell(u) }{\partial u_n} \right| \) decreases, and this decrease follows the \( |r_{n\ell}|^{-4} \) rate indicated by our theory. In the context of the $1D$ model, the numerical outcomes align with theoretical expectations when $\delta = 0.1$ and $\delta = 0.5$. However, for $\delta = 0$, we observed a pre-asymptotic rate of $|r_{n\ell}^{-2}|$, which soon transitions to $|r_{n\ell}^{-1}|$. An enlargement of the domain size revealed that the true rate for this scenario is in fact $|r_{n\ell}^{-1}|$. This can potentially be attributed to the presence of symmetries and cancellation effects in the energy model when the cubic term is disregarded.
In summary, our numerical results strongly support the qualitative sharpness of our theory.
 

\begin{figure}[!htb]
     \centering
     \begin{subfigure}[b]{0.42\textwidth}
         \centering
    \includegraphics[width=\textwidth]{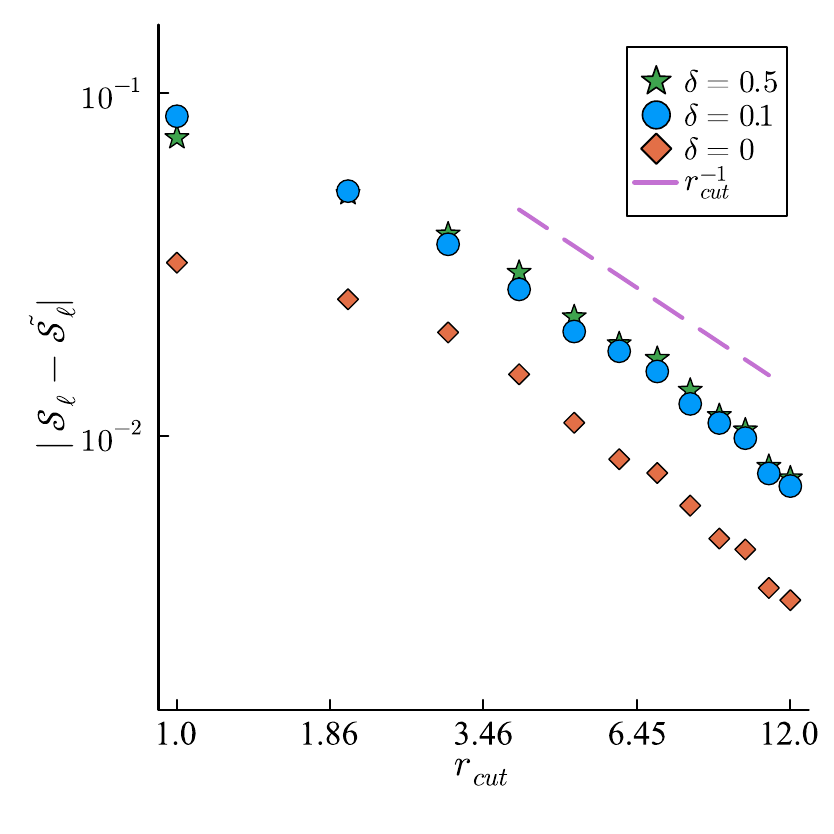}
         \caption{$1$D model}
         \label{fig:rc_1D}
     \end{subfigure}
     \begin{subfigure}[b]{0.42\textwidth}
         \centering
    \includegraphics[width=\textwidth]{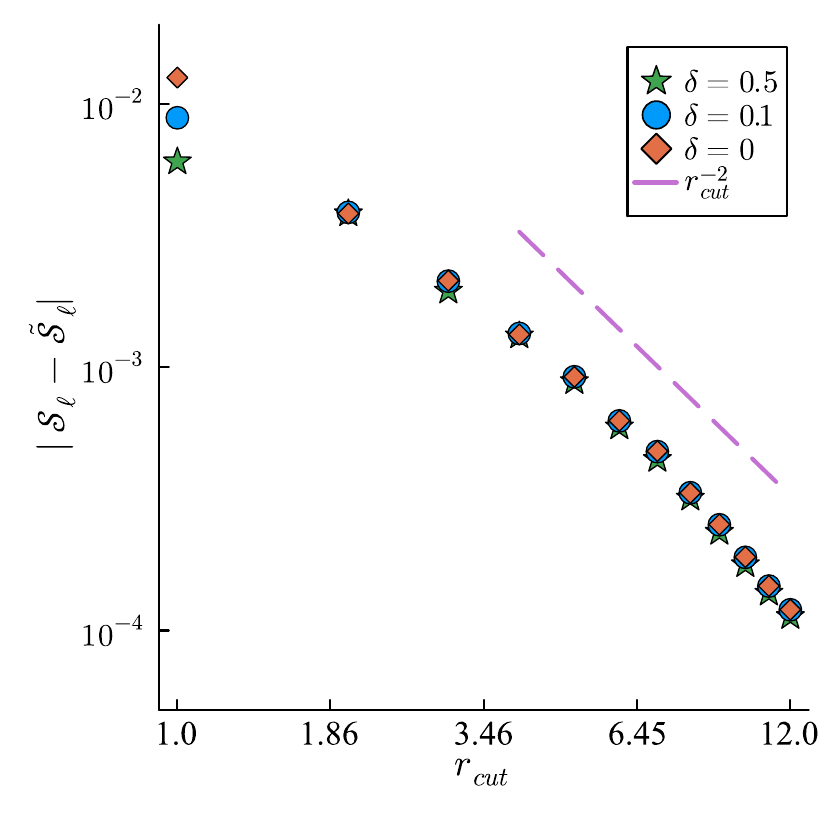}
         \caption{$2$D model}
         \label{fig:rc_2D}
     \end{subfigure}
    \caption{Absolute difference in truncated site entropy $\widetilde{\mathcal{S}}_{\ell}$ and non-truncated site entropy $\mathcal{S}_{\ell}$ for the 1D (\textbf{left}) and 2D (\textbf{right}) toy models with different $\delta$ values. The 1D model included 800 atoms, while the 2D model included 4225 atoms, with the site entropies evaluated near the center of the domains.}
    \label{fig:toy_rc}
\end{figure}

Figure \ref{fig:toy_rc} illustrates the relationship between the difference in truncated site entropy, denoted as $\widetilde{\mathcal{S}}_{\ell}$, and the non-truncated site entropy, $\mathcal{S}_{\ell}$, as a function of the cutoff radius. This relationship is shown separately for one-dimensional (1D) and two-dimensional (2D) toy models. The observed trend indicates that the reduction in the difference between $\widetilde{\mathcal{S}}_{\ell}$ and $\mathcal{S}_{\ell}$ follows the decay rate $|r_{\rm cut}|^{-d}$, as proposed in theorem \ref{thm:err_local}. This trend is consistent across both the 1D and 2D models. This numerically confirms that the truncated site entropy serves as an effective approximation for the non-truncated site entropy and provides a second justification for the surrogate models we will propose in the next section.


It should be noted that the evaluation of site entropy derivatives is computationally demanding.  To address this, we adopted an approach to identify the optimal contour and used the sparse nature of the Hessian Jacobians for derivative calculation. The details of this methodology are elaborated in Appendix \ref{sec:apd:contour}. Additionally, all source codes for the numerical tests can be found in our repository \cite{gitACEntropy}.

\section{Surrogate Models} \label{sec:num_rs}

The computation of vibrational entropy, as outlined in \eqref{eq:our_def} within the harmonic approximation, involves an initial step of determining the Hessian matrix, a difficult computation, to perform that typically scales as $O(N^2)$ if performed with finite differences or automatic differentiation, and challenging to implement ``by hand''.
Subsequently, diagonalizing this matrix incurs a scaling of $O(N^3)$ for systems comprising $N$ atoms. Exploiting sparsity can only partially reduce this cost.
To address this computational challenge, Lapointe et al.~\cite{PhysRevMaterials.4.063802} proposed a surrogate model for assessing harmonic vibrational entropy using a linear-in-descriptor machine learning (LDML) approach. 
Their derivation of entropy closely aligns with the methodology presented in Appendix \ref{sec:projected_normal_modes}, and is formally equivalent to our own derivation. The key justification for either approach is our proof of locality of site entropy.


In this section we connect our locality results to learning a surrogate for the site entropy functional. Specifically, we employ the Atomic Cluster Expansion (ACE) framework, and present a series of numerical results pertaining to the fitting of entropy and its applications in materials science, but with a focus on demonstrating the validity of our analysis in the previous sections.


\subsection{Atomic Cluster Expansion (ACE)}
\label{sec:sub:ACE}



Accurate energy and force calculations are best achieved through electronic structure techniques like DFT. However, DFT has limitations for larger systems and longer timescales. To overcome these limitations, machine-learned interatomic potentials (MLIPs), trained on DFT data, have become essential in computational materials science due to their high accuracy and transferrability~\cite{2019-ship1, PhysRevLett.104.136403, PhysRevLett.98.146401, 
doi:10.1021/jacs.3c04030, wang2024theoretical, Shapeev2016MomentTensor}. This methodology can be directly transferred to learn surrogate models for entropy.

We utilize the linear atomic cluster expansion (ACE) parameterisation~\cite{PhysRevMaterials.6.013804, wang2024theoretical}. All in-use MLIPs for materials typically express the total energy, $\E_{\text{ML}}$, as an sum over individual site energies. Leveraging the locality of site entropy, as established in Theorem~\ref{thm:local}, the same approach employed for energy modeling can be seamlessly applied.

To that end, we consider $N$ atoms described by their position vectors $y_j$. A set $\mathbf{R} := \{y_1, \ldots, y_N\} \in \mathbb{R}^{3N}$ of $N$ particle positions is called an atomic configuration. 
Let $y_{ij} = y_j - y_i$ be the distances between atom $j$ and a reference atom $i$, and let $\mathbf{y}_i = \{y_{ij}\}_{j \neq i}$ represent the atomic environment around atom $i$. The total entropy of a structure of this kind is broken down into site entropies in the ACE model, 
\begin{equation} 
S(\mathbf{R}) = \sum_{i=1}^{N} S^* (\mathbf{y}_i),
\end{equation}
where $S^*$ is a site entropy function that depends on its atomic environment $\mathbf{y}_i$. The mapping $S: \mathbb{R}^{3N} \to \R$ is permutation- and isometry-invariant, inherited from the same invariance of the energy. Given a cutoff radius $r_{\text{cut}}$, the ACE site entropy $S^*$ is expressed as
\begin{equation}
S^*(\mathbf{y}_i; \mathbf{c}) = \sum_{B \in \mathcal{B}} c_B B\left(\mathbf{y}_i\right),
\end{equation}
with basis functions \( B \) and parameters $c_B$ that are optimized via a least squares loss minimization. These basis functions are constructed to exhibit invariance under rotations, reflections, and permutations of the atomic environment and are naturally body-ordered, providing a physically interpretable means to converge the fit accuracy. A review of the ACE model and its parameters is given in Appendix \ref{sec:apd:ACE}.

We will explore the effects of estimating the site entropy parameterization $S^*$ from either total entropy, or from site entropies. The latter has the advantage that it increases the amount of available data, but the potential disadvantage that it employs an ad hoc spatial decomposition. (The decomposition used in out proofs may appear natural but in fact there are infinitely many possible alternative decompositions.)

To estimate the coefficients when training total entropy, we require a training dataset which contains a list of atomic configuration $\mathfrak{R}=\{\mathbf{R}\}$ for which the total entropy $\mathcal{S}_\mathbf{R} \in \R$, and entropy derivatives $\nabla \mathcal{S}_\mathbf{R} \in \R^{N \times 3}$ (where $N$ is the total number of atoms in each configuration $\mathbf{R}$) have been evaluated. A possible way to estimate the parameters, closely mimicking parameter estimation for interatomic potentials, is to minimize the quadratic loss function
\begin{eqnarray} \label{eq:quad-loss}
    \mathcal{L}(\mathbf{c}) := \sum_{\mathbf{R} \in \mathfrak{R}} \left( \omega^{2}_{S,\mathbf{R}} | S(\mathbf{c}; \mathbf{R}) - \mathcal{S}_\mathbf{R} |^{2} + \omega^{2}_{\nabla S,\mathbf{R}} | \nabla S(\mathbf{c}; \mathbf{R}) - \nabla \mathcal{S}_\mathbf{R} |^{2} \right),
\end{eqnarray}
with weights $\omega_{S}$ and $\omega_{\nabla S}$ adjusting the significance of the contributions of entropy and its derivatives.

Similarly, to estimate the coefficients when training site entropy on a training dataset $\mathfrak{R}$, we can minimize the following quadratic loss function
\begin{equation} \label{eq:quad-loss-site}
\mathcal{L}^*(\mathbf{c}) := \sum_{\mathbf{R} \in \mathfrak{R}} \left( \omega^{2}_{S^*,\mathbf{R}} \sum_{i=1}^{N} \left( S^*(\mathbf{y}_i; \mathbf{c}) - \mathcal{S}_{i,\mathbf{R}} \right)^2 + \omega^{2}_{G^*,\mathbf{R}} \sum_{i=1}^{N} \sum_{j=1}^{3} \sum_{k=1}^{N} \left( G^*_{ijk}(\mathbf{c}; \mathbf{R}) - \mathcal{G}_{ijk,\mathbf{R}} \right)^2 \right),
\end{equation}
where $\mathcal{S}_{i, \mathbf{R}} \in \R$ represents the site entropy evaluated at site $i$ and $G^*(\mathbf{c}; \mathbf{R}) \in \R^{N \times 3 \times N} $ represents the three-dimensional array with each element $G^*_{ijk}$ representing the derivative of the site entropy at site $i$ with respect to the $j$-th spatial dimension of the atom at the $k$-th site. This tensor captures how the contribution of each atom to the overall entropy varies with changes in the positions of all atoms within the system. Similarly, $\mathcal{G}_{ijk,\mathbf{R}}$ denotes the elements of the reference gradient tensor for the site entropy associated with the structure $\mathbf{R}$. The indices $i$, $j$, and $k$ serve the same purposes as described for $G^*_{ijk}(\mathbf{c}; \mathbf{R})$.

Selecting the appropriate training data, loss functions, and weights, as indicated in the loss function above, is crucial for developing precise models capable of accurate predictions, which will be specified in the following presentation. 


\subsection{Fitting Entropy}

In this section, we will present and discuss the fitting results achieved by using the above mentioned parameterisation of entropy via ACE. We will fit models on both total and site entropies and their respective derivatives. The details on the training and testing procedures will be given below.

\subsubsection{Models setup} 

We start by defining the training set, denoted by $\mathfrak{R}$, which encompasses the collection of training data. Geometry optimization is initially performed on the training domain to find the equilibrium atomic positions. Subsequently, with a chosen parameter $\alpha$ called the rattling parameter, and the number of configurations in $\mathfrak{R}$, denoted as $N_{\text{train}} \equiv |\mathfrak{R}|$, atomic positions are perturbed. Each atom is displaced by a vector with components that are uniformly distributed random numbers within the interval $[-\frac{\alpha}{2}, \frac{\alpha}{2}]$. 
This procedure is repeated $N_{\text{train}}$ times to generate a diverse set of configurations.
A similar approach is utilized to generate a test set. The number of configurations for both the training and test sets, $N_{\text{train}}$ and $N_{\text{test}}$, will be detailed for each case study. 
We choose the weights as $\omega_S \gg \omega_{\nabla S}$ to enforce more accuracy on entropy. 
The parameters $\{c_B\}$ are determined by minimizing the loss functions ~\eqref{eq:quad-loss} and ~\eqref{eq:quad-loss-site} using a Bayesian Ridge Regression (BLR)~\cite{witt2023otentials} solver, which is capable of autonomously determining the importance of input features within the model. 
The training process utilized open-source Julia packages: {\tt JuLIP.jl} package for the creation of test and training datasets, as outlined in \cite{gitJuLIP}, and {\tt ACEpotentials.jl} package, detailed in \cite{witt2023otentials} and accessible at \cite{gitACEPot}, for both ACE basis construction and ACE model fitting.

The hyperparamter choices and results are as follows:
\begin{itemize}
    \item \textbf{Training and Validation Sets:} The dataset includes 100 configurations of rattled Silicon, where each configuration is perturbed by a rattling parameter calculated as $0.1\, \text{\AA} \times U(0,1)$. Here, $U(0,1)$ denotes a random number drawn from a uniform distribution across the interval $(0,1)$. These configurations are encapsulated within a supercell with dimensions $2a_0 \times 2a_0 \times a_0$ and comprise 32 atoms each, with one vacancy present in every configuration. This set was then split into a training data set including 70 configurations and a validation set including 30 configurations.
    
    \item \textbf{Test Set:} The evaluation utilized a test set composed of 100 Silicon configurations, including two groups. The first group contains 50 bulk Silicon configurations, each rattled with a parameter of $0.08\, \text{\AA} \times U(0,1)$. The second group consists of 50 configurations, each including a vacancy and rattled at a parameter of $0.1\, \text{\AA} \times U(0,1)$, refering to the same uniform distribution. The dimensions of the supercell for the test set are identical to those of the training set, maintaining the size of $2a_0 \times 2a_0 \times a_0$.

  \item \textbf{Reference Entropy Calculation:} For the purpose of training, the hessian matrices needed to compute site and total entropy and their respective derivatives were computed using the Stillinger-Weber potential \cite{Stillinger1985Computer} implemented in the {\tt JuLIP.jl} package \cite{gitJuLIP}.

  \item \textbf{Hyperparameter Tuning:} 
  During the fitting process for total entropy, a model with a body order of 3, a maximum polynomial degree of 16, and a cutoff radius of $5.0\,\text{\AA}$ was employed. For site entropy fitting, the chosen model was characterized by a body order of 4, while retaining the same polynomial degree and cutoff radius as used for total entropy fitting. The decision to employ different body orders for each model arose from extensive trial and error. To achieve this, we created 10 different sets including 100 configurations, split into a training set including 70 and a validation set including 30 configurations as described above. We used the training set to fit the models. To find the best combination of body order and degree, we employed a grid search type method and created a grid of body orders (3 and 4) and polynomial degree (8, 10, 12, 14, 16) and evaluated the validation sets' error on each combination to identify the best setup. After this, a final training set was created and used to train the model that was evaluated on the test sets. This indicates that the hyperparameters were carefully tuned to optimize the model's performance for each type of entropy fitting.

The respective weights were chosen as $\omega_{S} = 60$ and $\omega_{\nabla S} = 1$ when training total entropy, and $\omega_{S_{\ell}} = 600$ and $\omega_{\nabla S_{\ell}} = 1$ when training site entropy. The slightly higher choice of weights for site entropy is a result of the fact that, when training site entropy, we have $N$ times more derivative data compared to training total entropy, assuming the same number of training configurations are used for both.

\end{itemize}

The fitting results for both total and site entropies are presented in Figure~\ref{fig:total_S} and Figure~\ref{fig:site_S}. 
For the model fitted to total entropies, the RMSE for the bulk Silicon test set was $1.68 \times 10^{-4}\, k_B$ for entropy and $1.46 \times 10^{-3}\, k_B/\text{\AA}$ for its derivatives. For the configurations with a vacancy, the RMSEs were $3.81 \times 10^{-4}\, k_B$ for entropy and $2.56 \times 10^{-3}\, k_B/\text{\AA}$ for the derivatives.
The RMSE for the model fitted to site entropies for the bulk Silicon test set were $2.58 \times 10^{-4}\, k_B$, and for the derivatives, $2.47 \times 10^{-3}\, k_B/\text{\AA}$ and for the test set including a vacancy the RMSEs were $6.71 \times 10^{-4}\, k_B$, and for the derivatives, $3.95 \times 10^{-3}\, k_B/\text{\AA}$.

\begin{figure}[htb!]
     \centering
     \begin{subfigure}[b]{0.42\textwidth}
         \centering
\includegraphics[width=\textwidth]{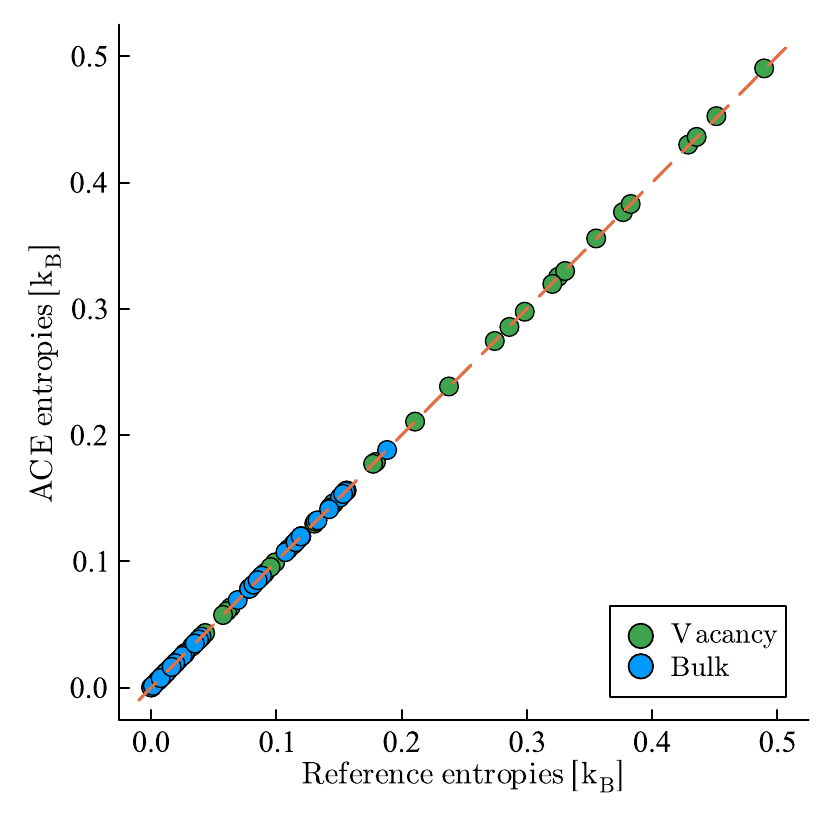} 
     \end{subfigure}
     \begin{subfigure}[b]{0.42\textwidth}
         \centering
         \includegraphics[width=\textwidth]{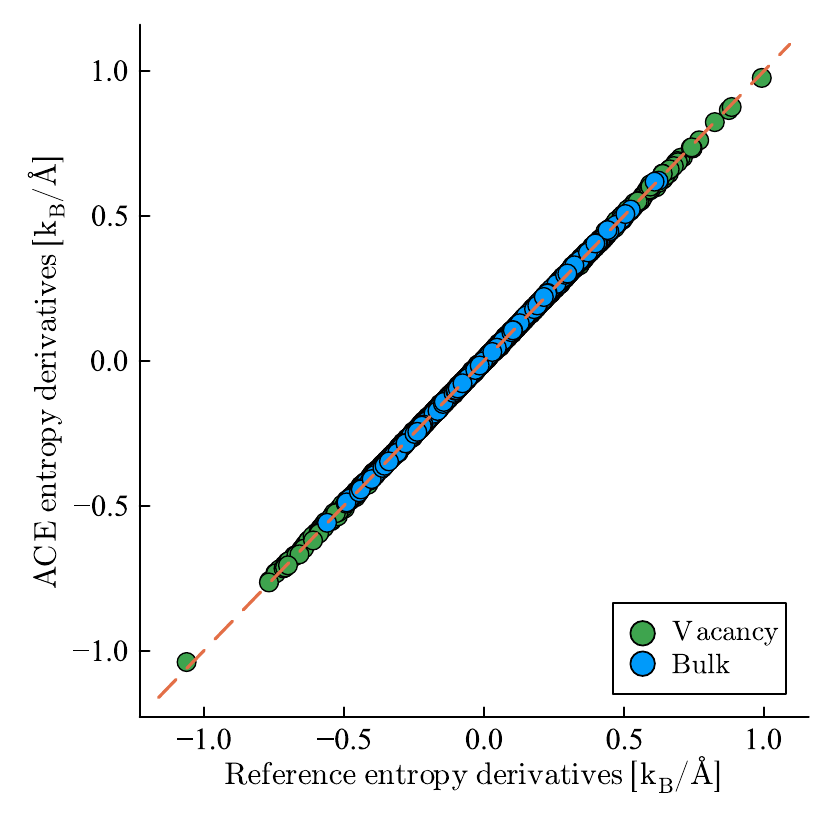}
     \end{subfigure}
    \caption{Total entropy trained on 70 configurations including 31 Silicon atoms and tested on 50 rattled configurations of bulk Silicon with a $2a_0 \times 2a_0 \times a_0$ supercell and and 50 rattled configurations of Silicon including a vacancy.}
    \label{fig:total_S}
\end{figure}

\begin{figure}[htb!]
     \centering
     \begin{subfigure}[b]{0.42\textwidth}
         \centering
\includegraphics[width=\textwidth]{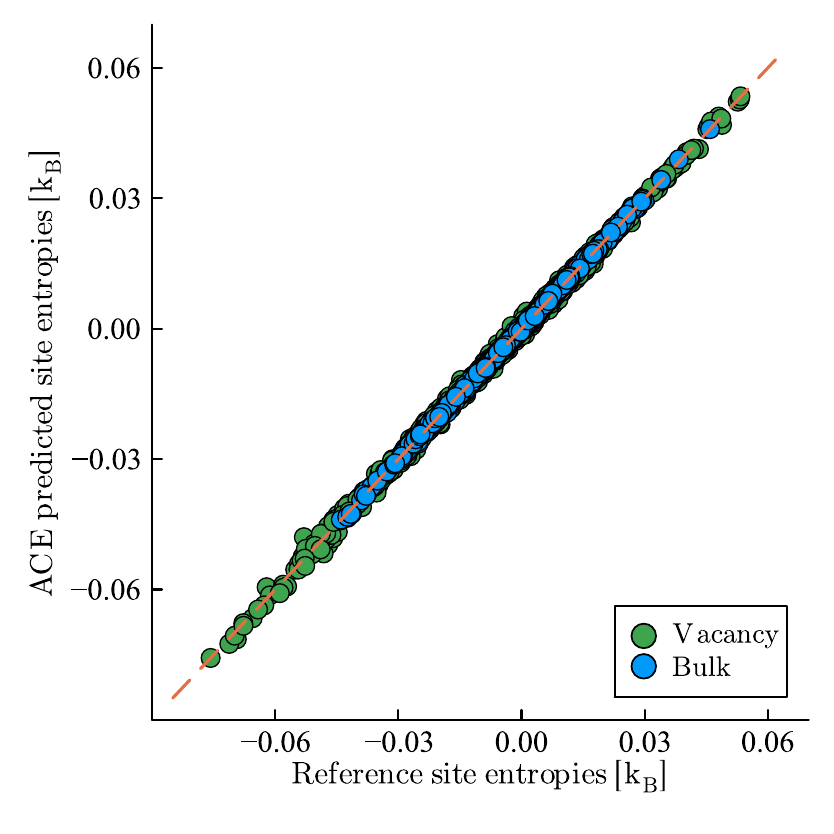} 
     \end{subfigure}
     \begin{subfigure}[b]{0.42\textwidth}
         \centering
         \includegraphics[width=\textwidth]{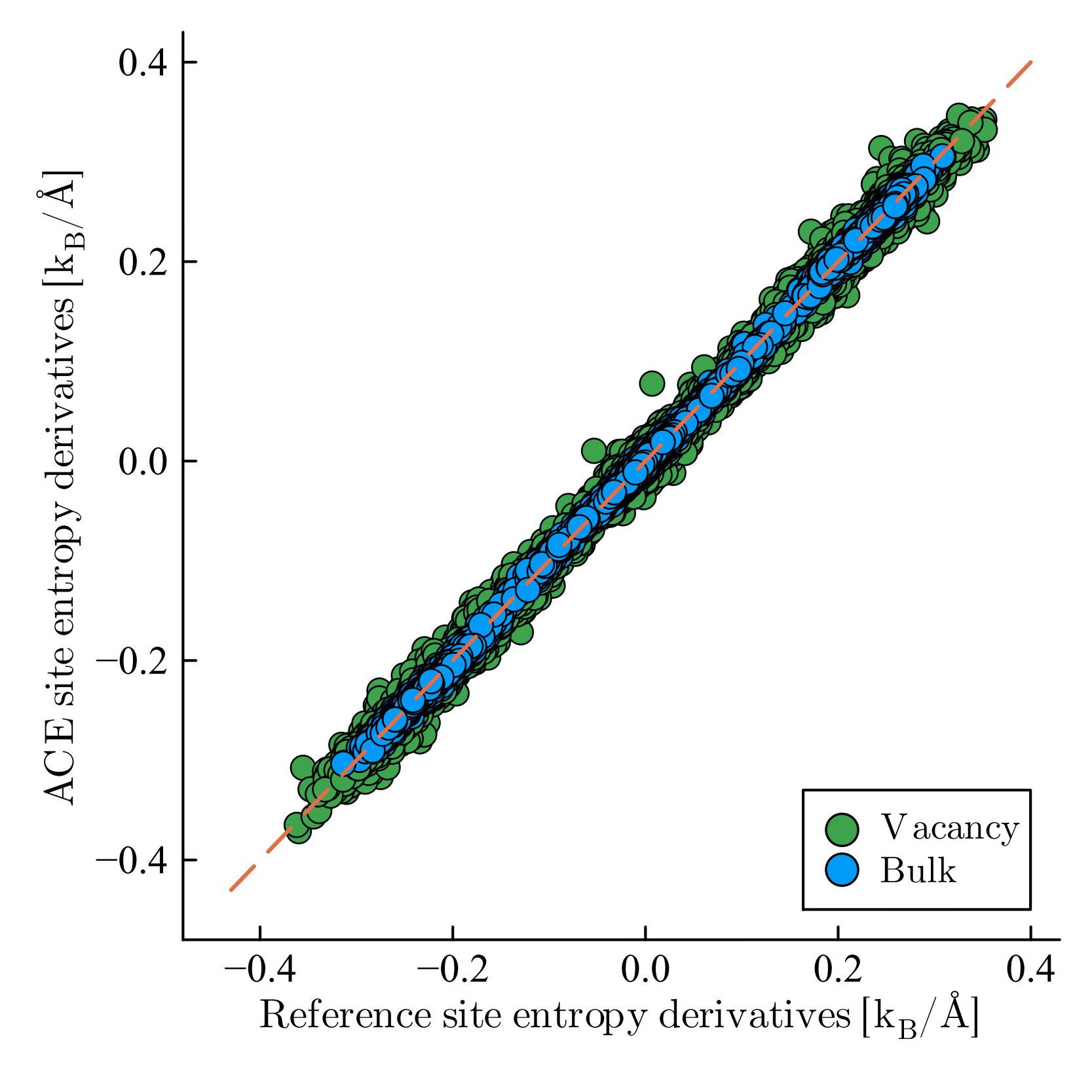}
     \end{subfigure}
    \caption{Site entropy trained on 70 configurations including 31 Silicon atoms and tested on 50 rattled configurations of bulk Silicon with a $2a_0 \times 2a_0 \times a_0$ supercell and and 50 rattled configurations of Silicon including a vacancy.}
    \label{fig:site_S}
\end{figure}

Additionally, we present a comparison between prediciting total entropy by the means of a model fitted only to total and another one only to site entropies. Our training set in both cases included 50 rattled configurations of Silicon (rattling parameter $= 0.1\, \text{\AA} \times U(0,1)$) including a vacancy within a $2a_0 \times 2a_0 \times a_0$ supercell. Our test set included 50 rattled configurations of bulk Silicon (rattling parameter $= 0.08\, \text{\AA} \times U(0,1)$). During the training for total entropy, we constructed a model with a body order of 3, a maximum polynomial degree of 14, and a cutoff radius of $5.0$\AA. For the site entropy, we assembeled a model with a body order of 3, a maximum polynomial degree of 14, and a cutoff radius of $5.0$\AA. The fitting results are showcased in Figure~\ref{fig:compare_S}. Consistent with expectations, the task of fitting site entropies presents greater complexity relative to the fitting of total entropies. For the specific objective of predicting total entropies, it is empirically more precise to utilize a model that has been exclusively trained on total entropy data. This empirical observation strongly suggests that there exists an alternative spatial decomposition of total entropy that is ``easier to fit'' (e.g. more local, lower body-order, smoother) than the explicit construction we employed in our analysis.


\begin{figure}[!htbp]
    \includegraphics[height=7.2cm]{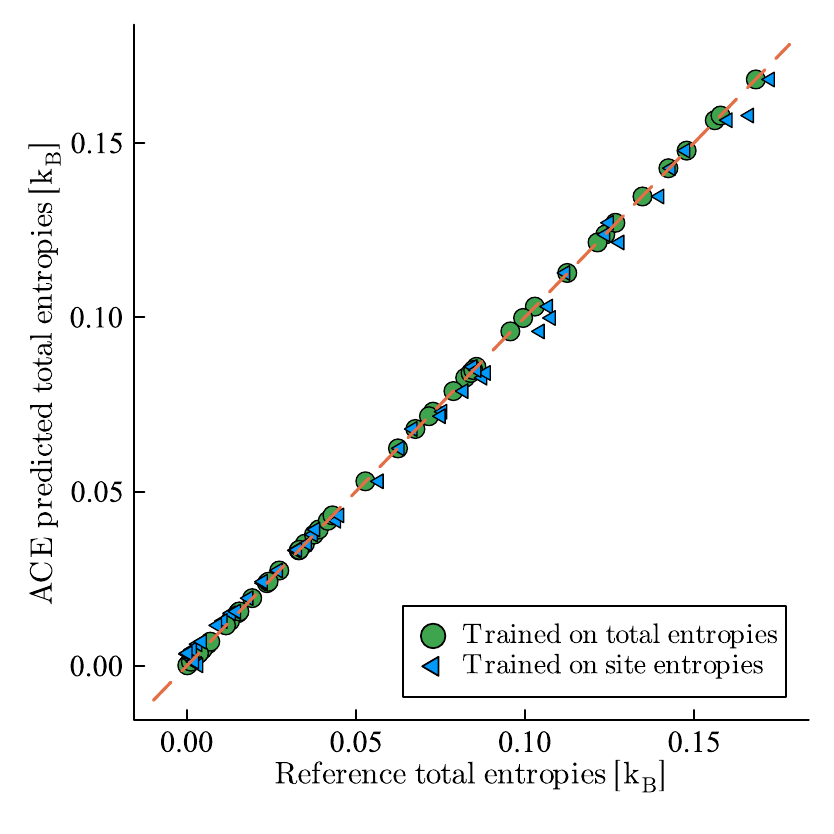}
    \caption{Comparing the performance of two models, one fitted exclusively to total entropy and one fitted solely to site entropy}
    \label{fig:compare_S}
\end{figure}

\subsection{Learning curves} \label{LR}

We train a series of models with identical hyperparameters except for $r_{\text{cut}}$. The body order and polynomial degree of the trained models  were 3 and 14 for total and 3 and 16 for site entropy. Training set sizes are varied within the range of 15 to 200 configurations, ensuring that each set comprises an equal configurations of the four distinct Silicon configuration sets all within a supercell of dimensions \( 2a_0 \times 2a_0 \times a_0 \): (i) Rattled ($\alpha = 0.1 \times U(0,1)$) bulk Silicon containing 32 atoms, (ii) Rattled ($\alpha = 0.1 \times U(0,1)$) Silicon supercell including a single vacancy, (iii) Rattled ($\alpha = 0.1 \times U(0,1)$) Silicon supercell including a divacancy , and (iv) Rattled ($\alpha = 0.1 \times U(0,1)$) Silicon supercell including an interstitial. The test set included a total of 80 configurations, with 20 configurations from each aforementioned sets. 


Figure~\ref{fig:LC_tot} and Figure~\ref{fig:LC_site} display the log-log plots corresponding to the total and site entropies, respectively. It is noteworthy that upon comparison, we observe that the convergence happens relatively fast for both cases, which is not unexpected due to the simplicity of our training sets. It is also evident that the total entropy values reach convergence earlier when compared to the site entropies. This observation also aligns with expectations, attributed to the inherently lower complexity associated with fitting total entropy. 

To conduct a quantitative analysis on model performance regarding the prediction of total entropies, with fitting results shown in Figure~\ref{fig:compare_S}, we trained two distinct models: one fitted exclusively against total entropies, and the other against site entropies. Our first objective was to investigate how variations in the number of training configurations influenced the RMSE observed on the test set. We used training sets of varying sizes, containing rattled configurations of Silicon with a vacancy (with a rattling parameter of $0.1 \text{\AA} \times U(0,1)$) within a $2a_0 \times 2a_0 \times a_0$ supercell. We used a test set consisting of 50 rattled bulk Silicon configurations (rattling parameter being $0.08 \text{\AA} \times U(0,1)$). The training was done using a body order of 3 and a polynomial degree of 14.
The fitting results, corresponding to two differing cut-off radii, $5.0 \text{\AA}$ and $7.0 \text{\AA}$, are depicted in Figure~\ref{fig:compare_totA}. Furthermore, Figure~\ref{fig:compare_totB} showcases the convergence behavior of these models as we increase the basis size, while maintaining a constant cut-off radius of $5.0 \text{\AA}$ and using 50 configurations for training. A convergence trend is observed in both models with increasing basis size, in line with our expectations.

.      


\begin{figure}[htb!]
     \centering
     \begin{subfigure}[b]{0.42\textwidth}
         \centering
        \includegraphics[width=\textwidth]{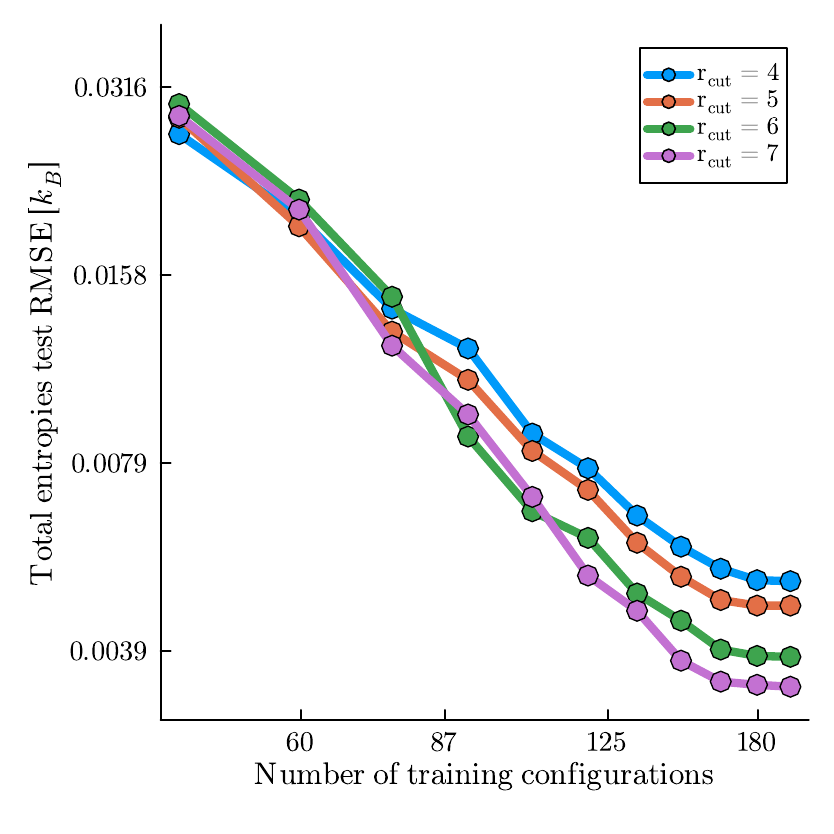}
         \caption{Total entropy}
     \end{subfigure}
     \begin{subfigure}[b]{0.42\textwidth}
         \centering
         \includegraphics[width=\textwidth]{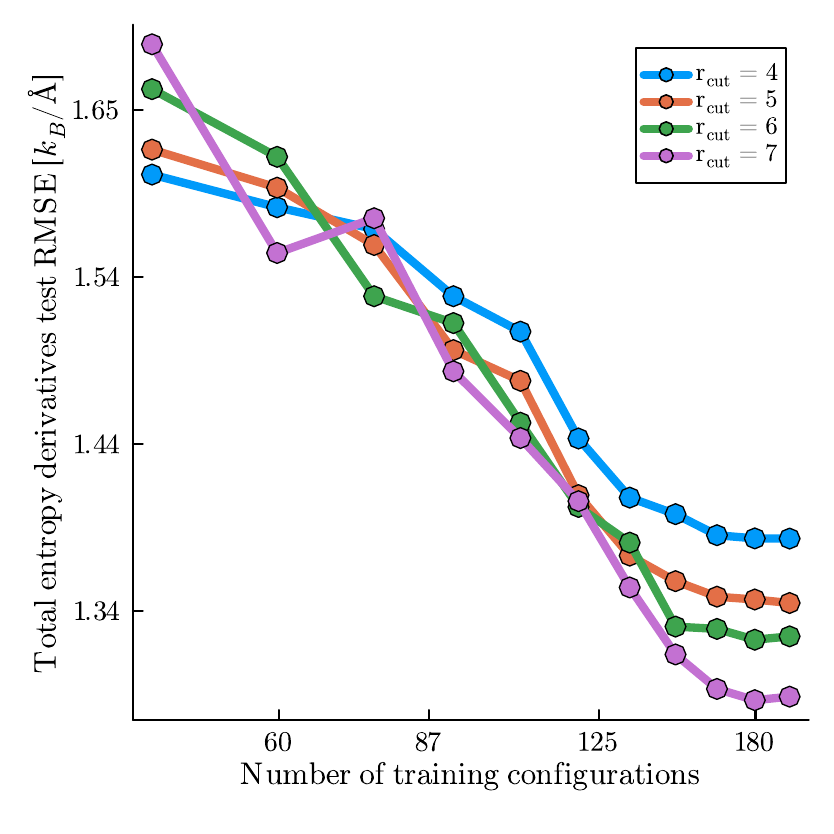}
         \caption{Total entropy derivatives}
     \end{subfigure}
    \caption{RMSE Convergence for total entropy varying $r_{cut}$ equal to $4.0 \, \text{\AA}$, $5.0 \, \text{\AA}$, $6.0 \, \text{\AA}$, and $7.0 \, \text{\AA}$ for Si. Each point corresponds to $N$ observations corresponding to total entropies and $96 \times N$ observations corresponding to total entropy derivatives where $N$ is equal to the number of training configurations.} 

    \label{fig:LC_tot}
\end{figure}

\begin{figure}[htb!]
     \centering
     \begin{subfigure}[b]{0.42\textwidth}
         \centering
        \includegraphics[width=\textwidth]{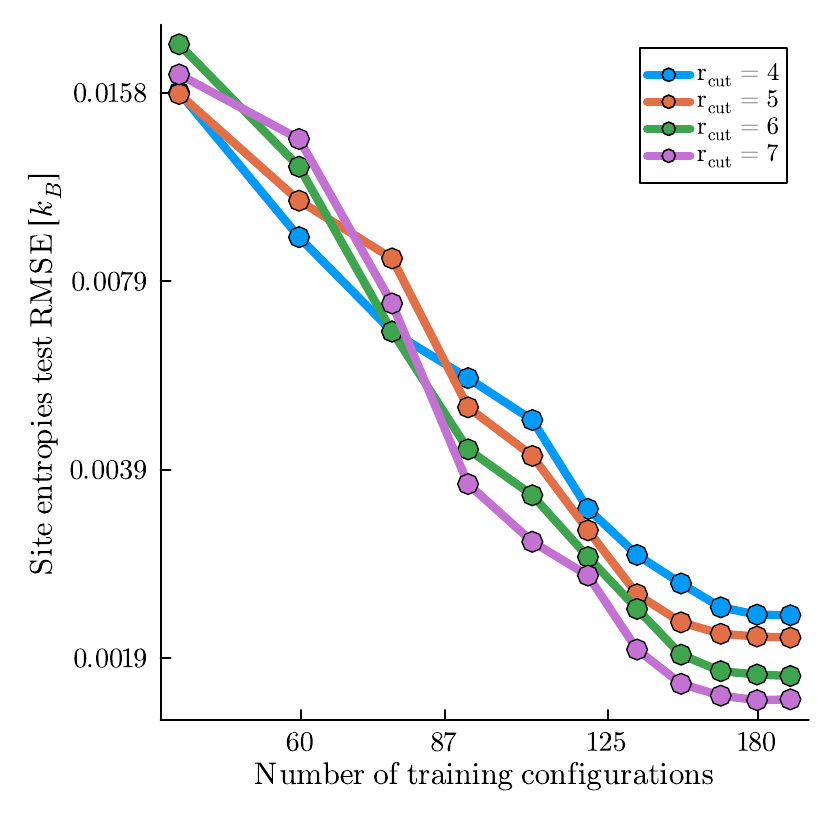}
         \caption{Site entropy}
         \label{eq:site_LR}
     \end{subfigure}
     \begin{subfigure}[b]{0.42\textwidth}
         \centering
         \includegraphics[width=\textwidth]{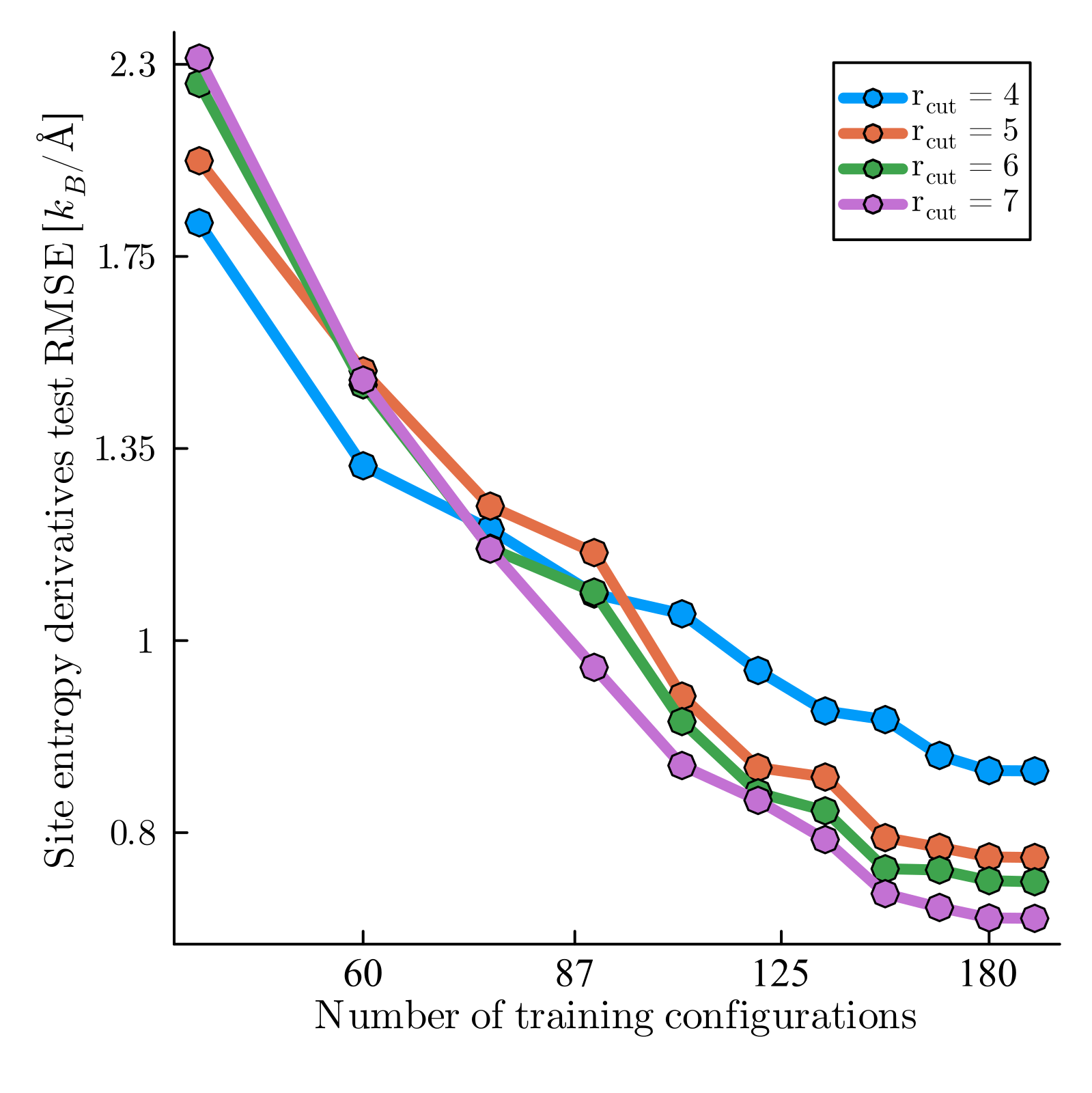}
         \caption{Site entropy derivatives}
     \end{subfigure}
    \caption{RMSE Convergence for site entropy varying $r_{cut}$ equal to $4.0 \, \text{\AA}$, $5.0 \, \text{\AA}$, $6.0 \, \text{\AA}$, and $7.0 \, \text{\AA}$ for Si. Each point corresponds to $32 \times N$ observations corresponding to site entropies and $96 \times 32 \times N$ observations corresponding to site entropy derivatives where $N$ is equal to the number of training configurations.}

        \label{fig:LC_site}
\end{figure}

\begin{figure}[htb!]
     \centering
     \begin{subfigure}[b]{0.42\textwidth}
         \centering
        \includegraphics[width=\textwidth]{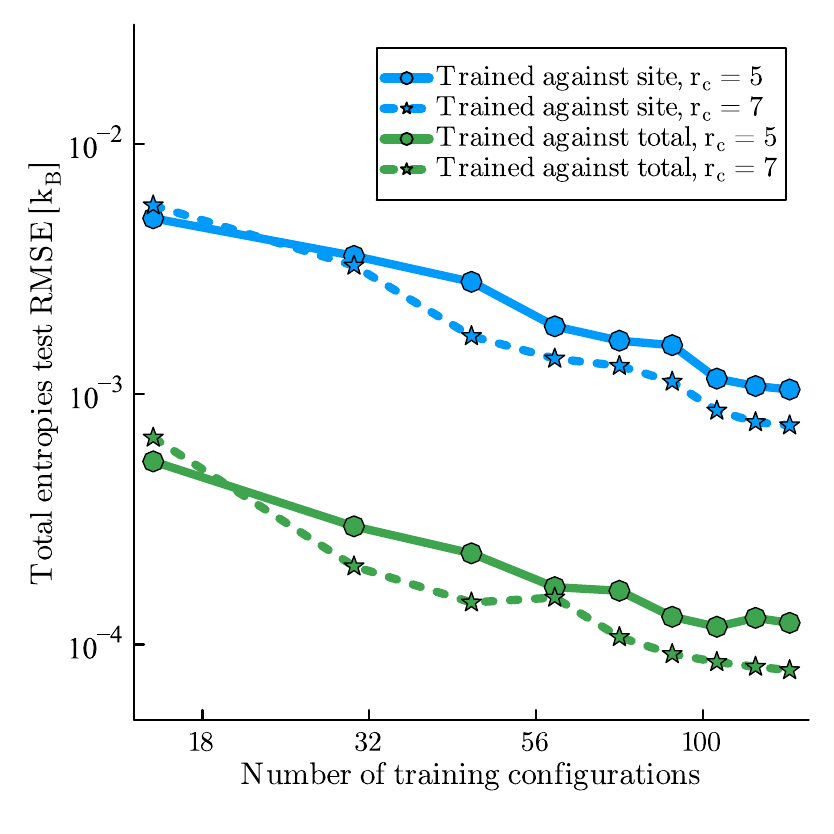}
        \caption{Effect of number of configurations}
         \label{fig:compare_totA}
     \end{subfigure}
     \begin{subfigure}[b]{0.42\textwidth}
         \centering
         \includegraphics[width=\textwidth]{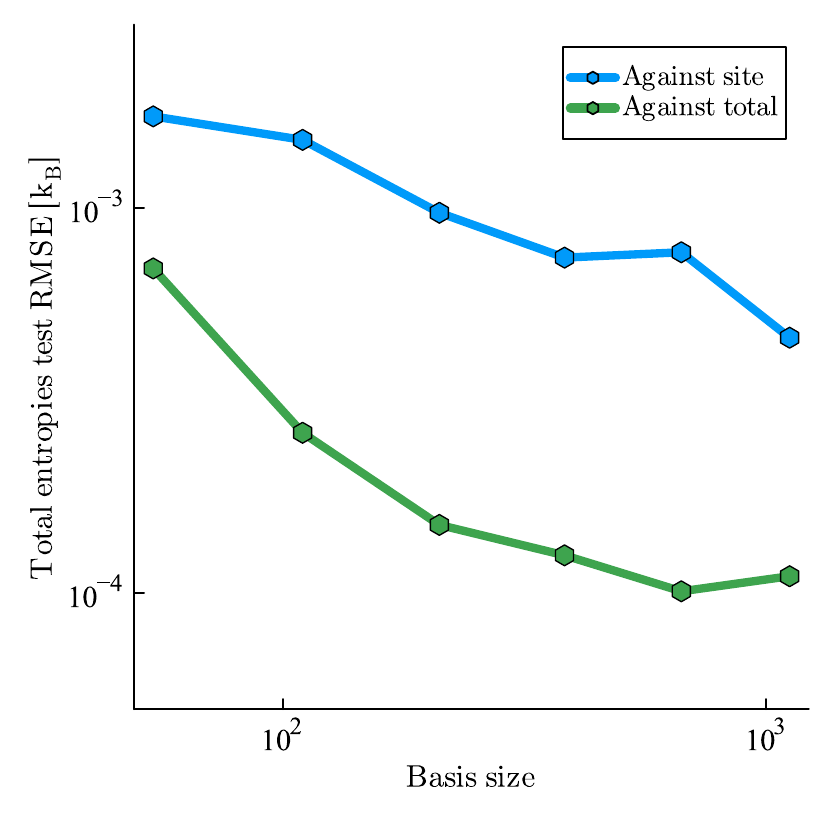}
         \caption{Effect of basis size}
     \label{fig:compare_totB}
     \end{subfigure}
     \caption{RMSE Convergence for total entropy for Si. Figure (A) shows the RMSE error of total entropy evaluated on the test set using two different models. One is solely trained against site entropies (\textbf{blue}) and once is solely trained against total entropies (\textbf{green}) for two different values of cut-off radius, $r_c = 5.0$ \AA \: (\textbf{solid line}) and $r_c = 7.0$\AA \; (\textbf{dashed line}). Figure (B) showcases the RMSE error of total entropy evaluated on the test set for these two models with a fixed cut off radius of $5.0$ \AA \: on a dataset containing 50 training configurations for varying basis size.}
    \label{fig:compare_tot}
\end{figure}

\subsection{Attempt frequency prediction} 
\label{sec:sub:TST}
%

%
An important property related to the vibrational entropy is the {\em transition rate}. Within harmonic transition state theory (HTST) it is defined by 
\begin{equation}
\mathcal{K}^{\rm HTST}_N =
    \exp\Big( - \beta \big[
      \F_N(\bar{u}^{\rm saddle}_N) -
      \F_N(\bar{u}^{\rm min}_N) \big] \Big),
      \label{eq:HTST}
\end{equation}
where, for a critical point $\bar{u}$, 
\[
    \F_N(\bar{u})
    = 
    \E_N^{\rm def}(\bar{u}) 
    + \beta^{-1}
      S_N(\bar{u}).
\]
Notably, in materials modeling, especially for systems operating at temperatures significantly below the melting point, the harmonic approximation is generally deemed reliable~\cite{julian2016, kramer}.

Traditionally, $\mathcal{K}^{\rm HTST}_N$ is more typically written as 
\[
    \mathcal{K}^{\rm HTST}_N = 
   \bigg(\frac{
      \textstyle \prod \lambda_j^{\rm min}
   }{
      \textstyle \prod \lambda_j^{\rm saddle}
   }\bigg)^{1/2}
   \, \exp\Big( - \beta \big[
      \E_N(\bar{u}^{\rm saddle}_N) -
      \E_N(\bar{u}^{\rm min}_N) \big]
      \Big),
\]
where the products involve only the positive eigenvalues of, respectively, the hessians $\nabla^2 \E_N(\bar{u}^{\rm min}_N)$ and  $\nabla^2 \E_N(\bar{u}^{\rm saddle}_N)$. However, for our purposes, the formulation \eqref{eq:HTST} is far more convenient. 

To showcase the potential application of our work, we predict the entropic term in the harmonic transition rate \eqref{eq:HTST}. In case of vacancy migration in a crystal, this term is also called the attempt frequency. The system we considered was Copper in a  \( 2a_0 \times 2a_0 \times a_0 \) supercell including a vacancy.  We used an embedded atom model (EAM) potential by Mishin et al. \cite{PhysRevB.63.224106}, as our reference potential and aimed to predict the attempt frequency of a vacancy migrating from one site to the adjacent one. We obtained the index-1 saddle point and minima using the Nudged Elastic Band (NEB) method \cite{10.1063/1.1329672} implemented in Atomic Simulation Environment (ASE) \cite{ase-paper}. We started with a data set including 30 configurations of rattled minima ($\alpha = 0.1 \times U(0,1)$) and predicted the entropy corresponding to the index-1 saddle point. The RMSE for this prediction was $2.341 \; \text{Hz}$. We then added the un-rattled index-1 saddle point to the data set, the error for that prediction is shown in Figure~\ref{fig:TST}. After that, we removed the index-1 saddle point and gradually added rattled saddle points ($\alpha = 0.1 \times U(0,1)$) to the training set and compute the RMSE as demonstrated in Figure~\ref{fig:TST} in a log-log plot. Our test set included 30 rattled saddle points ($\alpha = 0.1 \times U(0,1)$). To train the models, we used a body order of 3 and a polynomial degree of 12. The aim of the gradual adding of rattled saddle points is to add more information about the area between the minima and the saddle point to the model. It is evident that adding more saddle configurations to the training set effects the accuracy of the prediction tremendously.

\begin{figure}[htb!]
\includegraphics[height=7.2cm]{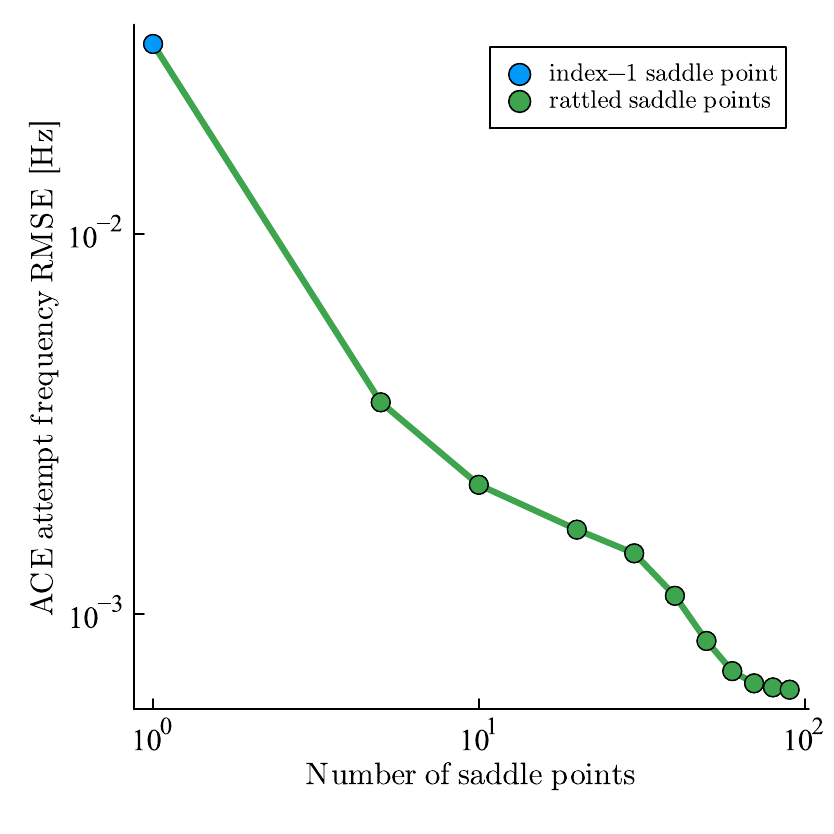}
         \caption{Attempt frequency of a vacancy migrating from one site to the adjacent one in Copper. The index-1 saddle point shows the error for a model trained on 30 rattled minima and the index-1 saddle point.}
    \label{fig:TST}
\end{figure}

\section{Conclusion}
\label{sec:conclusion} 

In this work, we have demonstrated that the total entropy can be decomposed into atomic site contributions and have rigorously estimated the locality of site entropy. Our analysis suggests that vibrational entropy can be accurately predicted using a surrogate model for site entropy, which we have developed using machine learning techniques, specifically employing the Atomic Cluster Expansion (ACE) model. Our numerical experiments primarily focus on point defects such as vacancies and interstitials. We have showcased the robustness of our approach in predicting vibrational entropy and the attempt frequency for transition rates governing point defect migration, which is a critical aspect of transition state theory rate approximations.

While the current study emphasizes the effectiveness of our approach in a relatively straightforward context, it also enables a detailed and rigorous examination, albeit within certain limitations. There are numerous potential generalizations that can be explored, including extending the methodology to more intricate material and defect geometries.

\appendix 

\section{Proofs}\label{sec: appendix_prrof}

In this section, we begin by introducing the necessary concepts and models within the primary context of this work. We accomplish this by conducting a comprehensive review of the framework proposed in~\cite{chen19, bcshap2016, fu2023adaptive}, while also adapting their approaches to align with the specific objectives of our current work. 

\subsection{Preliminaries}
\label{sec: appendix_prelem}

We introduce the semi-discrete Fourier transform (SDFT) 
\begin{equation}
\label{eq: sdFourier}
\hat{u}(k) := \sum_{\ell \in \Lambda} e^{i k \ell} \cdot u(\ell), \quad \text{with inverse} \quad u(\ell) = \frac{1}{|\mathcal{B}|} \int_{\mathcal{B}} e^{-i k \ell} \cdot \hat{u}(k) \,{\rm d}k,
\end{equation}
where $\mathcal{B} = \pi \mA^{-T} (-1,1)^d$ is a fundamental domain of reciprocal space
(equivalent to the first Brillouin zone) and has the volume $\lvert \mathcal{B} \rvert =
\frac{(2\pi)^d}{\lvert \det \mA \rvert}$.

We now review a characterization of \bF, which plays a key role in proving our main locality theory (cf. Theorem~\ref{thm:local}). Since $H^{\rm hom}$ is circulant, we can represent ${\rm \bF}\omega=F*\omega$ and define $F$ via its Fourier transform. 

To that end, we recall that
\begin{eqnarray*}
    \<H^{\rm hom}u, v\> = \sum_{\ell\in\Z^d} \nabla^2 V({\bf 0}) \big[Du(\ell), Dv(\ell)\big],
\end{eqnarray*}
then applying the SDFT we obtain
\begin{eqnarray}
    \<H^{\rm hom}u, v\> = \frac{1}{\mathcal{B}}\int_{\mathcal{B}} \hat{u}(k)^{*} \hat{h}(k) \hat{v}(k) \,{\rm d}k
\end{eqnarray}
with 
\begin{eqnarray*}
    a^{T} \hat{h}(k) b := \nabla^2 V({\bf 0})\big[\big((e^{-ik\rho}-1)a\big)_{\rho\in\mathcal{R}}, \big((e^{ik\rho}-1)b\big)_{\rho\in\mathcal{R}}\big].
\end{eqnarray*}

One can also reduce $\hat{h}(k)$ to the simpler form
\begin{eqnarray*} 
\hat{h}(k)=4\sum_{\rho\in\mathcal{R}'} A_{\rho} \cdot \sin^2\left(\frac{k\cdot \rho}{2}\right)
\end{eqnarray*}
with $\mathcal{R}'=(\mathcal{R}\cup\{0\})+(\mathcal{R}\cup\{0\})$, see~\cite[Section 6.2]{bcshap2016} for more details. Moreover, the stability implies that $c_0|k|^2\cdot\mathsf{I} \leq \hat{h}(k) \leq c_1|k|^2\cdot\mathsf{I}$ with the identity matrix $\mathsf{I}$ for all $k\in\mathcal{B}$~\cite{Hudson:stab}. We observe that $|\hat{h}(k)^{-1/2}|\lesssim|k|^{-1}$ as $|k|\rightarrow0$, hence we can define
\begin{align} 
   F(\ell) &:= \frac{1}{|\mathcal{B}|} \int_{\mathcal{B}} e^{-i k \ell} \cdot \hat{F}(k) \,{\rm d}k,
   \qquad \text{where} \quad \hat{F}(k) = \hat{h}(k)^{-1/2}, \label{eq:defn_F} 
   \\
   (\pmb{\rm F}u)(\ell) &:=
      \sum_{m \in \L} \big( F(\ell-m) - F(-m) \big) u(m).
\label{eq:defn_Fop}
\end{align}
The constant shift $\sum_{m}F(-m)u(m)$ in the definition of \bF$u$ ensures that \bF$u$ is well-defined~\cite{julian2016}.

The following lemma presents key properties of the operator ${\rm \textbf{F}}$, which holds a pivotal role in our framework. We state it here to ensure the completeness of our presentation.

\begin{lemma} {\cite[Lemma 3.1]{julian2016}} \label{th:properties_F}
    Let $F : \L \to \R^{m \times m}$ and {\rm \textbf{F}} be defined by \eqref{eq:defn_F} and \eqref{eq:defn_Fop}, respectively. Then, we have: 
(i) for any $\boldsymbol{\rho} \in \mathcal{R}^j$, $j \geq 0$, there exists a constant $C$ such that 
   \begin{equation*}
      |D_{\boldsymbol{\rho}} F(\ell)|\leq C (1+|\ell|)^{1-d-j}\quad \forall \ell\in\Z^d.
   \end{equation*}
(ii) {\rm \bF} $\in \mathcal{L}\big(\ell^2, \dot{\mathcal{W}}^{1,2}\big)$.
(iii) {\rm \bF}$^*  H^{\rm hom}${\rm \bF} $= I$, understood as operators $\ell^2 \to \ell^2$.
\end{lemma}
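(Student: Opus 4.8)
The plan is to reduce all three statements to a quantitative symbol analysis of the Fourier multiplier $\hat F(k)=\hat h(k)^{-1/2}$: first I would record the regularity of this symbol, then read off each claim. The single technical ingredient is the family of homogeneous derivative bounds
\[
\big|\partial_k^{\alpha}\hat h(k)^{-1/2}\big|\lesssim |k|^{-1-|\alpha|},\qquad 0<|k|\le 1,
\]
valid for every multi-index $\alpha$, together with the facts that $\hat h(k)^{-1/2}$ is real-analytic on $\mathcal{B}\setminus\{0\}$ and periodic (so that no boundary contributions appear on $\partial\mathcal{B}$ in the integrations by parts below). To obtain these bounds I would run a rescaling argument: since $\hat h$ is even and analytic, $\hat h(k)=Q(k)+O(|k|^4)$ with $Q$ the positive-definite acoustic quadratic form, so the rescaled symbols $\hat h_r(\xi):=r^{-2}\hat h(r\xi)$ converge as $r\to 0$, along with all $\xi$-derivatives, to $Q(\xi)$ uniformly on the annulus $\tfrac12\le|\xi|\le 2$, where $Q$ is bounded below by stability \cite{Hudson:stab}; hence $\hat h_r^{-1/2}$ and its derivatives are bounded uniformly in $r$, and undoing the scaling via $\hat h^{-1/2}(k)=r^{-1}\hat h_r^{-1/2}(k/r)$ at $|k|=r$ yields the displayed estimate. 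This step is the main obstacle; everything after it is Fourier bookkeeping.

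For (i), the discrete difference $D_{\rho}$ acts in the SDFT by multiplication with $(e^{-ik\rho}-1)$, so
\[
D_{\boldsymbol{\rho}}F(\ell)=\frac{1}{|\mathcal{B}|}\int_{\mathcal{B}}e^{-ik\ell}\,g_{\boldsymbol{\rho}}(k)\,{\rm d}k,\qquad g_{\boldsymbol{\rho}}(k):=\Big(\prod_{i=1}^{j}(e^{-ik\rho_i}-1)\Big)\hat h(k)^{-1/2}.
\]
Each factor $(e^{-ik\rho_i}-1)$ is entire and vanishes linearly at the origin, so Leibniz together with the bounds above gives $|\partial^{\alpha}g_{\boldsymbol{\rho}}(k)|\lesssim |k|^{(j-1)-|\alpha|}$ for $0<|k|\le 1$, while $g_{\boldsymbol{\rho}}$ is smooth away from $0$. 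I would then invoke the elementary decay lemma: if a periodic symbol $g$, smooth off the origin, satisfies $|\partial^{\alpha}g(k)|\lesssim |k|^{-s-|\alpha|}$ with $s<d$, then its inverse SDFT obeys $|\check g(\ell)|\lesssim(1+|\ell|)^{-(d-s)}$. This is proved by splitting the integral at $|k|\sim|\ell|^{-1}$, estimating the inner region directly and integrating by parts $n>d-s$ times in the outer region (legitimate by periodicity). Applying it with $s=1-j$ produces exactly $|D_{\boldsymbol{\rho}}F(\ell)|\lesssim(1+|\ell|)^{1-d-j}$.

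For (ii), the constant shift in \eqref{eq:defn_Fop} is annihilated by discrete differences, so $D_{\rho}(\bF u)=(D_{\rho}F)\ast u$ is a convolution whose SDFT multiplier is $(e^{-ik\rho}-1)\hat h(k)^{-1/2}$. This multiplier is bounded on $\mathcal{B}$, because the $|k|^{-1}$ singularity of $\hat h^{-1/2}$ is cancelled by the $O(|k|)$ vanishing of $(e^{-ik\rho}-1)$, and it is continuous elsewhere. Plancherel on $\ell^2(\Lambda)$ then gives $\|D_{\rho}(\bF u)\|_{\ell^2}\lesssim\|u\|_{\ell^2}$, and summing over the finite set $\mathcal{R}$ yields $\|D(\bF u)\|_{\ell^2}\lesssim\|u\|_{\ell^2}$, i.e. $\bF\in\mathcal{L}(\ell^2,\dot{\mathcal{W}}^{1,2})$.

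For (iii), since $H^{\rm hom}$ is expressed through $\nabla^2 V(\mathbf{0})[Du,Dv]$, the pairing $\langle H^{\rm hom}\bF u,\bF v\rangle$ depends only on the gradients $D\bF u,D\bF v$ and is therefore insensitive to the constant shift. Passing to the SDFT, where $\bF$ has multiplier $\hat h(k)^{-1/2}$ and $H^{\rm hom}$ has multiplier $\hat h(k)$, and using that $\hat h(k)$ is Hermitian positive definite and hence commutes with its square root, I compute
\[
\langle \bF^{*}H^{\rm hom}\bF u,v\rangle=\frac{1}{|\mathcal{B}|}\int_{\mathcal{B}}\hat u(k)^{*}\,\hat h(k)^{-1/2}\hat h(k)\hat h(k)^{-1/2}\,\hat v(k)\,{\rm d}k=\frac{1}{|\mathcal{B}|}\int_{\mathcal{B}}\hat u(k)^{*}\hat v(k)\,{\rm d}k=\langle u,v\rangle.
\]
As $u,v\in\ell^2$ are arbitrary, this is precisely $\bF^{*}H^{\rm hom}\bF=I$ on $\ell^2$.
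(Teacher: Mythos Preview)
Your argument is correct and is the standard route: symbol estimates for $\hat h(k)^{-1/2}$ via a rescaling/stability argument, a Fourier decay lemma with the split at $|k|\sim|\ell|^{-1}$ plus integration by parts for (i), Plancherel for (ii), and a direct multiplier computation for (iii). Note, however, that the present paper does not give its own proof of this lemma at all; it merely quotes the statement from \cite[Lemma~3.1]{julian2016} ``to ensure the completeness of our presentation'' and uses it as a black box. So there is nothing in the paper to compare your proof against; what you have written is essentially the proof one would find in the cited reference.
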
 
\\

A crucial quantity for proving locality is the estimation of the resolvent $\mathscr{R}_z(u) = (zI- \pmb{\rm F}^{*} H(u) \pmb{\rm F})^{-1}$. Before providing this estimate, we first review a useful lemma.

\begin{lemma} {\cite[Lemma 3.3 (i)]{julian2016}}
Let $X$ be a Hilbert space. Let $A \in \mathcal{L}(X, X)$ be a bounded linear operator with range of finite dimension at most $r \in \N$ and $I +A$ is invertible, then there exist
$c_1(A), \dots, c_{r+1}(A) \in \R$ such that
\begin{equation}
\label{eq: finite-rank correction}
   (I+A)^{-1}=I + \sum_{j=1}^{r+1} c_j(A) A^j.
\end{equation}
\end{lemma}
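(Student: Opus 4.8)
The plan is to collapse the operator inverse to a finite-dimensional problem on the range of $A$ and then finish with the Cayley–Hamilton theorem. Set $R := \operatorname{range}(A)$, a subspace of $X$ with $s := \dim R \le r$, and let $B := A|_R \colon R \to R$ be the compression of $A$ to its own range; this is well defined since $A(R) \subseteq A(X) = R$. The first step I would record is the collapsing identity $A^j = B^{j-1} A$ for every $j \ge 1$: for $x \in X$ the vector $Ax$ already lies in $R$, and each subsequent application of $A$ to a vector in $R$ agrees with $B$, so $A^j x = A^{j-1}(Ax) = B^{j-1}(Ax)$. Consequently any operator $p(B)A$ with $p$ a polynomial is a real linear combination of the powers $A, A^2, \dots$, which is precisely the shape of the claimed correction term.

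Next I would verify that $I_R + B$ is invertible on the finite-dimensional space $R$. If $v \in R$ satisfies $(I_R + B)v = 0$, then since $Bv = Av$ we obtain $(I+A)v = 0$, whence $v = 0$ by the hypothesised invertibility of $I + A$; injectivity on a finite-dimensional space forces invertibility. Because $B$ acts on a space of dimension $s \le r$, the Cayley–Hamilton theorem then lets me write $(I_R + B)^{-1}$ as a polynomial $p(B)$ of degree at most $s-1 \le r-1$: dividing the characteristic polynomial $\chi_B$ by $t+1$ gives $\chi_B(t) = (t+1)g(t) + \chi_B(-1)$ with $\chi_B(-1)\neq 0$, and evaluating at $B$ yields $(I_R+B)^{-1} = -\chi_B(-1)^{-1} g(B)$.

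With these two ingredients the verification is direct. I would take the ansatz $(I+A)^{-1} = I - p(B)A$ and check that it is a right inverse, using $A = I_R A$ on $R$ together with the collapsing identity in the form $A\,p(B)\,A = B\,p(B)\,A$:
\begin{equation*}
(I+A)\big(I - p(B)A\big) = I + \big[I_R - (I_R + B)p(B)\big]A = I,
\end{equation*}
since $(I_R+B)p(B) = I_R$ on $R$. As $I+A$ is assumed invertible, this right inverse coincides with the genuine inverse, so $(I+A)^{-1} = I - p(B)A$. Expanding $-p(B)A = \sum_{j\ge1} c_j A^j$ via the collapsing identity produces real coefficients $c_j$ attached to the powers $A^1, \dots, A^{s}$ with $s \le r$; setting the remaining coefficients to zero up to $A^{r+1}$ gives the stated form with $c_1(A), \dots, c_{r+1}(A) \in \R$ (the bound $r+1$ is a harmless over-count of the sharper degree $r$).

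The computation itself is routine, so the genuine content is the structural observation that every power $A^j$ factors through the $s$-dimensional compression $B$, which is what turns an operator inversion into a finite polynomial identity. The one point requiring care is the invertibility of $I_R + B$ on $R$: it must be \emph{deduced} from the invertibility of $I+A$ on all of $X$ rather than assumed, and it is exactly this fact that guarantees $-1$ is not an eigenvalue of $B$ so that Cayley–Hamilton controls the degree, and hence the number, of coefficients.
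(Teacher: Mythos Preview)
The paper does not give its own proof of this lemma; it is merely quoted from \cite[Lemma~3.3(i)]{julian2016} as an auxiliary tool. Your argument is correct and is the standard one: compress $A$ to its finite-dimensional range, note that $I_R+B$ inherits invertibility from $I+A$, invoke Cayley--Hamilton to write $(I_R+B)^{-1}$ as a polynomial in $B$, and then use $A^j = B^{j-1}A$ to unfold this into powers of $A$. As you observe, this actually yields the sharper degree bound $r$ rather than $r{+}1$. One small caveat: the conclusion $c_j(A)\in\R$ relies on $X$ being a \emph{real} Hilbert space (so that the characteristic polynomial of $B$ has real coefficients), which is the situation in the paper but is not explicit in the lemma's hypotheses.
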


As discussed in Section~\ref{sec:sub:tlim}, we are ready to give the estimate of resolvent $\mathscr{R}_z(u)$ for $u\in\mathcal{U}$, where 
\[
   \mathcal{U}(\underline\sigma, \overline\sigma) :=
      \big\{ u \in \dot{\mathcal{W}}^{1,2}:       \sigma\big[ \pmb{\rm F}^{*}  H(u) \pmb{\rm F} \big] \cap (0, \infty) \subset [\underline\sigma, \overline\sigma] \big\},
\]
with $0 < \underline\sigma < \overline\sigma$ fixed throughout this work and all constants in the following are allowed to depend on them.

\begin{lemma} \label{th:resolvent estimate}
Let $\mathscr{R}_z(u)$ be the resolvent operator defined by \eqref{eq:resolvent}. Then for all \( u \in \mathcal{U} \) there exists a constant \( C_1 > 0 \) independent of \( m \) or \( n \), such that
\begin{equation}  \label{eq:limit:resolvent estimate vNper}
    \big| [ \mathscr{R}_z(u) ]_{m n} \big|
    \leq C_1 \big(1+|r_{mn}|\big)^{-d}.
\end{equation}
\end{lemma}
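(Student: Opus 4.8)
The plan is to estimate the resolvent $\mathscr{R}_z(u) = (zI - \mathbf{F}^* H(u) \mathbf{F})^{-1}$ by writing it as a finite-rank perturbation of the homogeneous resolvent and exploiting the off-diagonal decay of the latter. First I would introduce the homogeneous operator $\mathbf{T}^{\rm hom} := \mathbf{F}^* H^{\rm hom} \mathbf{F}$, which by Lemma~\ref{th:properties_F}(iii) equals the identity $I$ on $\ell^2$, so that $\mathscr{R}^{\rm hom}_z := (zI - \mathbf{T}^{\rm hom})^{-1} = (z-1)^{-1} I$ has trivially decaying off-diagonal blocks (indeed they vanish for $m \ne n$). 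Then I would write $\mathbf{F}^* H(u) \mathbf{F} = \mathbf{T}^{\rm hom} + \mathbf{F}^* \big(H(u) - H^{\rm hom}\big)\mathbf{F}$, and observe that $H(u) - H^{\rm hom}$ is a short-range operator: its blocks $[H(u) - H^{\rm hom}]_{k k'}$ decay because $V_\ell$ is homogeneous outside $B_{R^{\rm def}}$ and $u \in \dot{\mathcal{W}}^{1,2}$ means $Du \in \ell^2$, so the difference of second derivatives $V_{\ell,\boldsymbol{\rho}}(Du(\ell)) - V_{,\boldsymbol{\rho}}(\mathbf{0})$ is small and summable. This is exactly the structure used in~\cite{julian2016} and the framework papers, so I would cite the relevant locality estimates for $H(u) - H^{\rm hom}$ there.

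The main device is the second resolvent identity combined with the finite-rank correction Lemma. Writing $B := \mathbf{F}^*(H(u) - H^{\rm hom})\mathbf{F}$, we have $\mathscr{R}_z(u) = (zI - \mathbf{T}^{\rm hom} - B)^{-1}$. Factoring out $\mathscr{R}^{\rm hom}_z$ gives $\mathscr{R}_z(u) = \big(I - \mathscr{R}^{\rm hom}_z B\big)^{-1}\mathscr{R}^{\rm hom}_z$. The operator $A := -\mathscr{R}^{\rm hom}_z B$ is, strictly speaking, not finite rank, but a standard truncation argument (as in~\cite{julian2016}) approximates $B$ by a finite-rank operator $B_r$ supported on a finite set of sites up to an arbitrarily small error, so that $I + A_r$ with $A_r = -\mathscr{R}^{\rm hom}_z B_r$ is invertible (using $u \in \mathcal{U}$ to control the spectrum away from $z$), and Lemma~\ref{eq: finite-rank correction} gives $(I + A_r)^{-1} = I + \sum_{j=1}^{r+1} c_j(A_r) A_r^j$. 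The off-diagonal decay of $\mathscr{R}_z(u)$ then follows by tracking the decay through each term: $A_r$ has decaying blocks (product of the decaying kernel $F$ from Lemma~\ref{th:properties_F}(i) appearing in $\mathbf{F}$, and the short-range $H(u)-H^{\rm hom}$), and the key combinatorial fact is that a product of operators with $(1+|r_{mn}|)^{-d}$ off-diagonal decay retains $(1+|r_{mn}|)^{-d}$ decay in dimension $d$ — this is a discrete convolution estimate $\sum_{k}(1+|m-k|)^{-d}(1+|k-n|)^{-d} \lesssim (1+|m-n|)^{-d}\log(2+|m-n|)$, and one must be slightly careful to absorb the logarithm, e.g. by working with a weight $(1+|r_{mn}|)^{-d+\epsilon}$ or by a more refined estimate that the paper's framework already provides.

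I would carry out the steps in this order: (1) reduce to the homogeneous resolvent using $\mathbf{F}^* H^{\rm hom}\mathbf{F} = I$; (2) establish the off-diagonal decay of $B = \mathbf{F}^*(H(u)-H^{\rm hom})\mathbf{F}$ using Lemma~\ref{th:properties_F}(i) and the short-range property of $H(u)-H^{\rm hom}$; (3) truncate $B$ to finite rank $B_r$ with controlled error and apply the finite-rank correction formula to $(I + A_r)^{-1}$, using $u \in \mathcal{U}$ to ensure invertibility and a uniform bound on the $c_j$; (4) propagate the decay through the finite sum via the discrete convolution estimate, and pass to the limit $r \to \infty$; (5) conclude $|[\mathscr{R}_z(u)]_{mn}| \leq C_1(1+|r_{mn}|)^{-d}$ with $C_1$ independent of $m,n$ (and of $z \in \mathcal{C}$, since $\mathcal{C}$ is compact and bounded away from the spectrum). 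The main obstacle I anticipate is step (4): controlling the accumulation of logarithmic factors when composing $j$ operators each with critical $d$-decay, and ensuring the constant $C_1$ stays uniform in $n$ (not just for $r_{mn}$ large) and in the truncation parameter. This is precisely where the careful bookkeeping of~\cite{julian2016} is needed, and I would lean on their lemmas (the analogue of their Lemma~3.3 and the decay estimates in their Section~3–5) rather than redo the convolution combinatorics from scratch.
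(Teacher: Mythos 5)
Your overall strategy---reduce to the homogeneous resolvent via $\mathbf{F}^* H^{\rm hom}\mathbf{F}=I$, treat $\mathbf{F}^*(H(u)-H^{\rm hom})\mathbf{F}$ as a perturbation, invoke the finite-rank correction formula \eqref{eq: finite-rank correction}, and propagate off-diagonal decay through the resulting finite sum---is the same skeleton as the paper's proof. However, there is a genuine gap at your steps (3)--(4), and it is not merely the bookkeeping issue you flag. The operator $B=\mathbf{F}^*(H(u)-H^{\rm hom})\mathbf{F}$ has infinite rank whenever $Du$ is not compactly supported, and your plan to truncate it to $B_r$ and then pass to the limit $r\to\infty$ cannot work as stated: the representation $(I+A_r)^{-1}=I+\sum_{j=1}^{r+1}c_j(A_r)A_r^j$ has a number of terms that \emph{grows with the rank} $r$, so even though each $c_j$ is bounded, the sum does not converge to anything usable as $r\to\infty$, and the constant in your decay estimate would blow up with the truncation parameter. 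The finite-rank lemma is only useful if the rank is fixed once and for all, independently of any limit.

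The paper resolves this with a different splitting of the perturbation. It introduces the interpolating Hessian $H^M(u)$ in \eqref{eq:res:defn of HM}, which uses the homogeneous coefficients $\nabla^2V({\bf 0})$ for $|\ell|\leq M$ and the true coefficients $\nabla^2V(Du(\ell))$ for $|\ell|>M$. Then $H^M(u)-H(u)$ is supported on $|\ell|\leq M$ and hence \emph{exactly} finite rank for a fixed, finite $M$; the finite-rank correction formula applies with a fixed $r$ and never needs a limit. The infinite-rank far-field part of $H(u)-H^{\rm hom}$ is instead absorbed into the \emph{reference} operator $H^M(u)$: since $Du\in\ell^2$ forces $\nabla^2V(Du(\ell))\to\nabla^2V({\bf 0})$ as $|\ell|\to\infty$, choosing $M$ large makes $\mathbf{F}^*H^M(u)\mathbf{F}$ a small-in-norm, off-diagonally decaying perturbation of the identity, so its resolvent $\mathscr{R}_z^M$ already satisfies the $(1+|r_{mn}|)^{-d}$ bound (this is what the citations to Eqs.~(3.19) and (3.26) of \cite{julian2016} supply). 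The remaining work is then exactly your convolution estimate applied to the finitely many terms $(B^M)^j\mathscr{R}_z^M$, where the sum over the defect core $|\ell|\leq M$ contributes the extra factors $(|\ell|+1)^{-d}$ that prevent the logarithmic losses you were worried about. So the fix for your proof is not a more refined limit argument but a reallocation: put the small infinite-rank tail into the unperturbed operator and keep only the genuinely compactly supported discrepancy as the finite-rank perturbation.
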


\begin{proof}

The main technique we use is to decompose a Hessian operator $H$ into two components $H=H^{\rm d}+ H^{\rm h}$ where $H^{\rm d}$ as finite rank and represents the defect core while $H^{\rm h}$ is close to $H^\text{hom}$ and thus represents the far-field. 

We begin by splitting the difference of the hessians $H(u)-H^{\text{hom}}$ into a sum of a finite rank operator representing the defect core and an infinite rank part representing the far field. To that end, given $M$ sufficiently large, we denote
\begin{equation} \label{eq:res:defn of HM}
   \< H^M(u) v, z\>
   := \sum_{\lvert \ell \rvert \leq M} \nabla^2V({\bf 0})\big[Dv(\ell), Dz(\ell)\big]
      +\sum_{\lvert \ell \rvert > M} \nabla^2V(Du(\ell))\big[Dv(\ell), Dz(\ell)\big].
\end{equation}
The resolvent $\mathscr{R}_z$ can be decomposed as follows:
\begin{equation} 
\begin{aligned}
\mathscr{R}_z = (zI- \pmb{\rm F}^{*}  H(u) \pmb{\rm F})^{-1}  =& ~ \Big[ zI- \pmb{\rm F}^{*}  H^M(u) \pmb{\rm F} + \pmb{\rm F}^{*}  H^M(u) \pmb{\rm F} - \pmb{\rm F}^{*}  H(u) \pmb{\rm F}\Big]^{-1} \\
    =&~ \Big[\big(zI- \pmb{\rm F}^{*}  H^M(u) \pmb{\rm F} \big) + \pmb{\rm F}^{*} \big(H^M(u)-H(u)\big) \pmb{\rm F}\Big]^{-1} \\
    =&~ \Big[(\mathscr{R}_z^M)^{-1}  + \pmb{\rm F}^{*} \big(H^M(u)-H(u)\big) \pmb{\rm F}\Big]^{-1} \\
    =&~ \Big[(\mathscr{R}_z^M)^{-1} \big ( I + \mathscr{R}_z^M \pmb{\rm F}^{*} \big(H^M(u)-H(u)\big) \pmb{\rm F}\big ) \Big]^{-1} \\
    =&~ \Big[I + \mathscr{R}_z^M \pmb{\rm F}^{*} \big(H^M(u)-H(u)\big) \pmb{\rm F}\Big]^{-1} \cdot \mathscr{R}_z^M\\
    :=&~ \big(I + B^M \big)^{-1} \cdot \mathscr{R}_z^M,
\end{aligned}
\end{equation}
where $\mathscr{R}_z^M = \big(zI- \pmb{\rm F}^{*}  H^M(u) \pmb{\rm F}\big)^{-1}$.

Since the resolvent $\mathscr{R}_z(u)$ exists for all $z \in \mathbb{C}
, u \in \mathcal{U}$, the inverse on the right hand side exists as well.
Additionally, the term $B^M:=\mathscr{R}_z^M \pmb{\rm F}^{*} \big(H^M(u)-H(u)\big){\bf F}$ has finite-dimensional range since this is clearly the case for $H^M(u) - H(u)$. Thus, according to Lemma~\ref{eq: finite-rank correction} it follows that
\begin{equation*} 
   \big(I + B^{M} \big)^{-1} =
   I + \sum_{j=1}^{r+1}  c_j(B^{M}) \big(B^{M}\big)^{j}.
\end{equation*}
The constants $c_j$ remain uniformly bounded in $z, u \in \mathcal{U}$, therefore we only need to estimate
\begin{equation} 
\big(B^{M}\big)^{j} \mathscr{R}_z^M = \big(\mathscr{R}_z^M \pmb{\rm F}^{*} \big(H^M(u)-H(u)\big) \pmb{\rm F} \big)^j \mathscr{R}_z^M,
\end{equation}
for $1 \leq j \leq r+1$. 
Using the techniques shown in \cite[Eq. (7.13)]{julian2016}, one can obtain
\begin{equation}
\begin{aligned}
\left| \Big(\pmb{\rm F}^{*} \big(H^M(u)-H(u)\big) \pmb{\rm F} \Big)_{mn} \right| \lesssim& \sum_{\lvert \ell \rvert \leq M} (\lvert \ell \rvert +1 )^{-d} (\lvert \ell -m \rvert +1) ^{-d} (\lvert \ell-n \rvert +1 )^{-d}  \\
\leq& \sum_{\ell} (\lvert \ell \rvert +1 )^{-d} (\lvert \ell -m \rvert +1) ^{-d} (\lvert \ell-n \rvert +1 )^{-d} \\
\lesssim& (|n|+1)^{-d}(|n-m| + 1)^{-d}+(|m| + 1)^{-d}(|n-m| +1)^{-d} \\ 
 &+(|n|+1)^{-d}(|m|+1)^{-d}.
\end{aligned}
\end{equation}
Hence, applying {\cite[Eq. (3.19) and Eq. (3.26)]{julian2016}}, we have
\begin{equation*}
\big| \big[ \mathscr{R}_z \big]_{m n} \big| = \left| \big [\big(\mathscr{R}_z^M \pmb{\rm F}^{*} \big(H^M(u)-H(u)\big) \pmb{\rm F} \big)^j \mathscr{R}_z^M \big ]_{mn} \right| \leq C_1 (1+|r_{mn}|)^{-d},
\end{equation*}
where $C_1>0$ is independent of $m$ or $n$. This completes the proof.
\end{proof}

We are ready to give the proof of our main theorem (cf.~Theorem~\ref{thm:local}).

\begin{proof}[Proof of Theorem~\ref{thm:local}]
The site entropy $\mathcal{S}_{\ell}(u)$ can be expressed using the contour integral defined in~\eqref{eq:local_S+},
\begin{equation}
   \mathcal{S}_\ell(u) :=  - \frac{1}{2}\frac{1}{2\pi i} {\rm Trace} \left[ \oint_{\mathcal{C}} \log^+(z)\cdot\big(zI - \mathbf{F}^{*} H(u) \mathbf{F}\big)^{-1}\,{\rm d}z \right]_{\ell\ell},
\end{equation}
where \( \mathscr{R}_z :=\big(zI - \mathbf{F}^{*}  H(u) \mathbf{F}\big)^{-1} \). Differentiating with respect to \( u \), we get
\begin{equation}
\begin{aligned}
\frac{\partial \mathcal{S}_\ell(u)}{\partial u} &= - \frac{1}{2}\frac{1}{2\pi i} \oint_{\mathcal{C}} \log^{+}(z)\cdot {\rm Trace} \left[ \mathscr{R}_z \mathbf{F}^{*} \frac{\partial H(u)}{\partial u} \mathbf{F} \mathscr{R}_z \right]_{\ell \ell}\,{\rm d}z.
 \end{aligned}
\end{equation}

The first and second variations of $\E(u)$ are defined as follows:
\begin{equation*}
\begin{aligned}
\langle \delta \E (u), v \rangle &=  \sum_{\ell \in \Lambda} \sum_{\rho \in \mathcal{R}} V_{,\rho}\big(Du(\ell)\big)  \cdot D_{\rho} v(\ell), \\
\langle \delta^2 \E (u)v, w \rangle &=  \sum_{\ell \in \Lambda} \sum_{\rho , \sigma \in \mathcal{R}}  V_{,\rho \sigma}\big(Du(\ell)\big) \cdot D_{\rho} v(\ell) D_{\sigma} w(\ell).
\end{aligned}
\end{equation*}
Hence, we can write
\begin{align*}
\langle \delta^2 \E (u) \textbf{F}^{\alpha}, \textbf{F}^{\beta} \rangle =  \sum_{\ell \in \Lambda} \sum_{\rho , \sigma \in \mathcal{R}}  V_{,\rho \sigma}\big(Du(\ell)\big) \cdot D_{\rho} \textbf{F}^{\alpha}(\ell) D_{\sigma} \textbf{F}^{\beta}(\ell).
\end{align*}

According to \eqref{th:properties_F}, we have
\begin{align}
\frac{\partial}{\partial u_n} \Big[ \textbf{F}^* H(u) \textbf{F}\Big]_{\alpha \beta}=  \sum_{\ell\in\Lambda} \sum_{\rho, \sigma, \tau \in \mathcal{R}} V_{, \rho\sigma\tau}\big(Du(\ell)\big) \cdot D_{\rho} \textbf{F}^{\alpha}(\ell) D_{\sigma} \textbf{F}^{\beta}(\ell) D_{\tau} e_n(\ell).
\end{align}
Hence, one can acquire 
\begin{equation}
\begin{aligned}
\Big|\frac{\partial}{\partial u_n} \Big[ \textbf{F}^* H(u) \textbf{F}\Big]_{\alpha \beta}\Big| &\lesssim \sum_{\ell\in\Lambda} \Big(\sum_{\rho\in \mathcal{R}} \big|D_{\rho} \textbf{F}^{\alpha}(\ell) \big|^{2}\Big)^{1/2} \Big(\sum_{\sigma \in \mathcal{R}} \big|D_{\sigma} \textbf{F}^{\beta}(\ell)\big|^2\Big)^{1/2} \Big(\sum_{\tau\in \mathcal{R}} \big|D_{\tau} e_n(\ell)\big|^2\Big)^{1/2}  \\
&\lesssim \sum_{\ell \in \Lambda} \big|D F(\ell-\alpha)\big| \cdot \big|D F(\ell-\beta)\big|  \\
&\lesssim  |n - \alpha|^{-d} |n - \beta|^{-d}, 
\end{aligned}
\end{equation}
where we used the fact that $D_\tau e_n(\ell)$ is non-zero only for $\ell$ in a bounded neighbourhood of $\ell$, and the last inequality follows from Lemma~\ref{th:properties_F}. 
Applying Lemma ~\ref{th:resolvent estimate}, we have
\begin{equation}
\begin{aligned}
\left[\mathscr{R}_z \frac{\partial}{\partial u_n} \Big[ \textbf{F}^* H(u) \textbf{F}\Big]_{\alpha \beta} \mathscr{R}_z \right]_{\ell\ell}
& \lesssim \sum_{{\alpha, \beta} \in \Lambda} |\alpha -\ell|^{-d} | \beta - \ell|^{-d} |n - \alpha|^{-d} |n - \beta|^{-d} \\ 
& \lesssim \Big ( \sum_{{\alpha} \in \Lambda} |\alpha -\ell|^{-d}  |n - \alpha|^{-d} \Big)^2
 \\ 
& \lesssim \Big ( \int_1^{|n-\ell|} r^{-d} r^{d-1} |n- \ell|^{-d} \,{\rm d}r \Big )^2 + \int_{|n-\ell|}^{\infty} r^{d-1} r^{-2d} \,{\rm d}r  \\ 
& \lesssim |n -\ell|^{-2d}. 
\end{aligned}
\end{equation}

We can now estimate the derivative of the site entropy as
\begin{equation}
\begin{aligned}
\left| \frac{\partial \mathcal{S}_\ell(u) }{\partial u_n} \right| &\leq \frac{1}{2}\frac{1}{2\pi} \mathrm{Trace} \left| \oint_{\mathcal{C}} \log^+(z)\cdot \left[\mathscr{R}_z \mathbf{F}^{*} \frac{\partial H(u)}{\partial u_n} \mathbf{F} \mathscr{R}_z\ \right]_{\ell\ell} {\rm d}z \right| \\
&\lesssim \frac{1}{4\pi} \oint_{\mathcal{C}} \big|\log^+(z)\big| \cdot |n -\ell|^{-2d}\,{\rm d}z \\
&\leq C_2 |n -\ell|^{-2d},
\end{aligned}
\end{equation}
where $C_2>0$ is a constant. Let $r_{n\ell}:=|n - \ell|^{-2d}$. We can then establish the locality as stated in Theorem~\ref{thm:local}, thus completing the proof.
\end{proof}

\begin{proof}[Proof of Theorem~\ref{thm:err_local}]

For $u \in \dot{\mathcal{W}}^{1,2}$, we consider the Taylor expansion of $\widetilde{\mathcal{S}}^{+}_\ell$ at the reference homogeneous lattice, that is,
\begin{eqnarray}
    \widetilde{\mathcal{S}}_\ell(u) =  \widetilde{\mathcal{S}}_\ell({\bf 0}) + \sum_{m \in \Lambda} \frac{\partial \widetilde{\mathcal{S}}_\ell({\bf 0})}{\partial u_m} \cdot u_{m} + \text{higher order terms}.
\end{eqnarray}

According to the definition~\eqref{eq:truncated_def}, the difference in site entropy can be estimated by
\begin{align}
    \left|\mathcal{S}_\ell(u) - \widetilde{\mathcal{S}}_\ell(u) \right| &\leq \left| \sum_{|r_{\ell m}|>r_{\rm cut}} \frac{\partial \mathcal{S}_\ell({\bf 0})}{\partial u_{m}} u_m \right| \nonumber \\ 
    &\leq \| u \|_{L^{\infty}} \cdot \sum_{|r_{\ell m}|>r_{\rm cut}} \left| \frac{\partial \mathcal{S}^{+}_\ell({\bf 0})}{\partial u_{m} } \right|. 
\end{align}

Next, we employ Theorem \ref{thm:local} to estimate the decay of the derivatives of the site entropy. Specifically, we can write
\begin{align}
    \left|\mathcal{S}_\ell(u) - \widetilde{\mathcal{S}}_\ell(u) \right| &\leq \| u \|_{L^{\infty}} \cdot \sum_{|r_{\ell m}|>r_{\rm cut}} C_2 |r_{\ell m}|^{-2d} \nonumber \\
    &\leq C_3 \| u \|_{L^\infty} \cdot \int_{|r_{\ell m}| \leq r_{\rm cut}} |r_{\ell m}|^{-2d} \, \mathrm{d}V \nonumber \\
    &\leq C_3 \| u \|_{L^\infty} \cdot \int_{0}^{r_{\rm cut}} r^{-2d} \cdot r^{d-1} \, \mathrm{d}r,
\end{align}
where $\mathrm{d}V$ represents the volume element in the $d$-dimensional space. This can be estimated in terms of the cut-off radius $r_{\rm cut}$, resulting in the bound
\begin{align}
    \left|\mathcal{S}_\ell(u) - \widetilde{\mathcal{S}}_\ell(u) \right| & \leq C_3  \| u \|_{L^\infty} \cdot r_{\rm cut}^{-d},
\end{align}
where $C_3$ is a constant that depends on the volume of the integration domain and the constant $C_2$ from Theorem~\ref{thm:local}.
\end{proof}

\section{An alternative derivation of entropy}
\label{sec:projected_normal_modes}
In this section, we provide an alternative derivation of site entropy commonly used in physics and engineering \cite{FULTZ2010247, Dederichs1980}. Expanding on our discussion of lattice displacements from a classical standpoint in Section~\ref{subsec:FFE}, we now delve into a quantum mechanical perspective, assuming that a lattice vibration mode with frequency $\omega$ behaves like a simple harmonic oscillator, thereby being confined to specific energy values. We will show that this derivation  yields the same definition for entropy mentioned in Section \ref{subsec:FFE}. 

To that end, we first consider a harmonic oscillator with energy levels given by:
\begin{equation}
E_n = \hbar \omega \left( n + \frac{1}{2} \right) ,
\end{equation}
where \( \hbar \) is the reduced Planck constant, and \( \omega \) is the angular frequency.
The partition function for the quantum harmonic oscillator is the sum of the Boltzmann factors for all possible states~\cite{FULTZ2010247}, that is, 
\begin{align}
Z(\omega) = \sum_{n=0}^{\infty} e^{-\beta E_n} = e^{-\beta \hbar \omega/2} \cdot \sum_{n=0}^{\infty} e^{-\beta \hbar \omega n} = \frac{e^{-\beta \hbar \omega/2}}{1 - e^{-\beta \hbar \omega}},
\end{align}
where the last identity follows from the results of infinite geometric series and $\beta:=1/(k_{B}T)$.

Using the definition of Helmholtz free energy defined by \eqref{eq:our_def}, we can obtain
\begin{equation}
\begin{aligned} \label{eq:freeE}
\mathcal{A}(\omega) &= -\frac{1}{\beta}\log\big(Z(\omega)\big) \\
&= -\frac{1}{\beta} \left( -\frac{\beta \hbar \omega}{2} - \log(1 - e^{-\beta \hbar \omega}) \right) \\
&= \frac{1}{2} \hbar \omega + \frac{1}{\beta} \log(1 - e^{-\beta \hbar \omega}).
\end{aligned}
\end{equation}

Suppose that we have already obtained all the eigenstates $|\phi \rangle$ and eigenvalues $\omega_\alpha^2$ of the Hessian $H$, obtained through $(H - M\omega_\alpha^2) |\phi \rangle = |0 \rangle$. Then, we define the \textit{total density of states} (DOS) \cite{finnis03} by
\begin{align}\label{eq:DOS_def}
\Omega(\omega) := \sum_{\alpha = 1}^{3N} \delta(\omega - \omega_\alpha). 
\end{align}
The DOS can be comprehended in the operational sense that if we integrate $\Omega(\omega)$ over a frequency range $\omega_1$ to $\omega_2$, we therefore obtain the total number of states within that frequency range~\cite{Chen2018},  
\begin{align} \label{eq:DOS_QoI}
    \langle \Omega(\omega) , f \rangle = \int f(\omega) \Omega(\omega) \; \mathrm{d} \omega = \sum_{\alpha = 1}^{3N}  f(\omega_{\alpha}) \quad \text{for} \; f \in C(\R).
\end{align}

Hence, for a system of $N$ atoms with $3N$ degrees of freedom, we expect $\int_0^{\infty} \Omega(\omega) \,\mathrm{d} \omega  =3N$. Using \eqref{eq:DOS_QoI}, we can now determine the overall free energy of the system, $\mathcal{A}_{\text{total}}$, accounting for all vibrational modes as follows:
\begin{equation}\label{eq:Atotal}
\mathcal{A}_{\text{total}} = \sum_{\alpha = 1}^{3N} \mathcal{A}(w_\alpha) = \int_0^{\infty} \mathcal{A}(\omega) \Omega(\omega) \,\mathrm{d} \omega.
\end{equation}
Substituting \eqref{eq:freeE} into \eqref{eq:Atotal}, one can obtain
\begin{equation}
\mathcal{A}_{\text{total}} = \int_0^{\infty} \left( \frac{1}{2} \hbar \omega + k_B T \log(1 - e^{-\beta \hbar \omega}) \right) \Omega(\omega)\,\mathrm{d} \omega.
\end{equation}

Now, we can compute the derivative of $\mathcal{A}_{\text{total}}$ with respect to temperature to acquire the total entropy, that is,
\begin{equation}\label{eq:Stotal}
\mathcal{S}_{\text{total}} = k_B \int_0^{\infty} \left [\frac{\beta \hbar \omega}{e^{\beta \hbar \omega} - 1} - \log(1 - e^{-\beta \hbar \omega}) \right ] \Omega(\omega) \,\mathrm{d} \omega.
\end{equation}
In the classical limit of high temperatures, specifically when the temperatures exceed the crystal's Debye temperature such that \( \frac{\hbar \omega}{k_B T} \ll 1 \), \eqref{eq:Stotal} asymptotically approaches to
\begin{align} \label{eq:DOS_entropy}
\mathcal{S}_{\text{total}} &= k_B \int_0^\infty  \left[\log\left(\frac{kT}{\hbar\omega_\alpha}\right) + 1\right] \Omega(\omega) \,\mathrm{d} \omega.
\end{align}

To determine defect formation entropy, we need to assess the change in the total density of states (DOS) due to the presence of defect, i.e.,
\begin{align}
\mathcal{S}^{\rm defect} - S^{\rm ideal} = k_B \int_0^\infty  \left[\log\left(\frac{kT}{\hbar\omega_\alpha}\right) + 1\right] \Delta \Omega(\omega)  \,\mathrm{d} \omega,
\end{align}
where the term \( \Delta \Omega(\omega) \) represents the difference in DOS introduced by the defect, defined as \( \Delta \Omega(\omega) = \Omega^{0}(\omega) - \Omega(\omega) \). Here, $\Omega^{0}(\omega)$ and $\Omega(\omega)$ denote the total DOS of the defect and ideal lattice, respectively.  

In the case where the defect does not add additional degrees of freedom to the lattice, e.g. for a substitutional impurity, we have
$$\int_0^\infty \Delta \Omega(\omega)  \ \mathrm{d} \omega =0,$$
which leads to the fact that
\begin{align}
\Delta \mathcal{S} = \mathcal{S}^{\rm defect} - \mathcal{S}^{\rm ideal} = -k_B \int_0^\infty  \log (\omega) \Delta \Omega(\omega) \, \mathrm{d} \omega.
\end{align}

We can use the equivalent representations for ~\eqref{eq:DOS_def} and write
\begin{equation}
\begin{aligned}
\Delta \Omega(\omega) &= \sum_{\alpha} \big (\delta(\omega - \omega_\alpha) - \delta(\omega - \omega_{\alpha}^{0}) \big )
\\ &= 2 \omega \; \mathrm{Trace} \big ( \delta(\omega^2 - H) - \delta(\omega^2 - H_{0}) \big ).
\end{aligned}
\end{equation}
Thus, the defect formation entropy is
\begin{align}
\Delta \mathcal{S} = \frac{k_B}{2} \sum_{\alpha=1}^{3N} \; \log \frac{(\omega_\alpha^0)^2}{(\omega_\alpha)^2}
= \frac{k_B}{2} \; \mathrm{Trace}(\log H_0 - \log H)
= \frac{k_B}{2} \; \log (\frac{\mathrm{det}\; H_0}{\mathrm{det}\; H}), 
\end{align}
which is identical to our derivation ~\eqref{eq:our_def} shown in the main context.

Finally, we are in the stage of considering the spatial decomposition of $\Omega$ which would automatically lead to a spatial decomposition of $S$. Following the discussion in \cite{Dederichs1980}, we have
\begin{align} \label{eq:decomDOS}
    \Omega_\ell(\omega)= \sum_{\alpha} \delta(\omega - \omega_\alpha) [\phi_\alpha]_{\ell}^2,
\end{align}
where $[\phi_\alpha]_{\ell}$ is the $\ell$-th entry of $\phi_\alpha$. Using \eqref{eq:DOS_entropy} and \eqref{eq:decomDOS}, we can decompose the total entropy into local contributions from individual atoms,
\begin{align}
\mathcal{S}_{\ell}(\omega) := \langle \Omega_\ell(\omega), \xi	\rangle = \sum_{\alpha =1}^{3N}  \xi(\omega_\alpha)  [\phi_\alpha]_{\ell}^2 \quad \text{where} \  \xi := k_B \left[\log\left(\frac{kT}{\hbar\omega_\alpha}\right) + 1\right].
\end{align}
It is straightforward to verify that
\begin{align}
    \mathcal{S}_{\text{total}}= \sum_{\ell\in\Lambda} \mathcal{S}_\ell = k_B \sum_{\alpha =1}^{3N}  \left[\log\left(\frac{kT}{\hbar\omega_\alpha}\right) + 1\right]  [\phi_\alpha]_{\ell}^2. 
\end{align}

 We have shown that in the local basis, total entropy can be accurately partitioned into site-specific contributions. This local entropy is essential for understanding the thermodynamic properties of materials at the atomic level. Thus, the locality principle is not just a mathematical convenience but a reflection of the physical reality of how atomic vibrations contribute to the overall entropy of a system.


\section{Fast contour integration}
\label{sec:apd:contour}

We first recall that the definition of total entropy introduced in Section~\ref{sec:sub:site_entropy}, that is,
\begin{align}\label{eq:S_contour}
\mathcal{S}(u) :=& -\frac{1}{2}\operatorname{Trace}\,\log(\textbf{F}^*H(u)\textbf{F}) \nonumber \\
=& -\frac{1}{2} \frac{1}{2\pi i} \oint_{\mathcal{C}} \log (z)\cdot {\rm Trace} \big(  z\mathbf{I}-\pmb{\rm F}^{*}  H(u) \pmb{\rm F}\big)^{-1}\,{\rm d}z,
\end{align}
where the second identity follows from the definition of the logarithm of an operator. To compute~\eqref{eq:S_contour} effectively, it is necessary to employ numerical integration methods for contour integration. However, conventional numerical integration approaches may become inefficient, particularly when dealing with the spectrum of the operator of interest, denoted as $\pmb{\rm F}^{*} H(u) \pmb{\rm F}$ in our context, which does not lend itself to a straightforward integration path.

To overcome this challenge, we integrate the trapezoidal rule with conformal maps that incorporate Jacobi elliptic functions, as suggested in \cite{doi:10.1137/070700607}. We employ a conformal mapping technique that relocates the contour to a more uniform domain. This allows us to perform the contour integral along a path that avoids the branch cut of the logarithm. The transformed contour is discretized, and the trapezoidal rule is applied to approximate the integral, resulting in a substantial improvement in computational efficiency.

More precisely, we first apply a change of variables that could achieve the improved convergence rate based on the discussion in~\cite{doi:10.1137/070700607}.  Introducing a new variable $w = \sqrt{z}$, ${\rm d}z=2w\,{\rm d}w$, we can express \eqref{eq:S_contour} as
\begin{eqnarray}\label{eq:new_Su}
\mathcal{S}(u) = - \frac{1}{2\pi i} \oint_{\mathcal{C}_w} w \log (w^2)\cdot {\rm Trace} \big(  w^2\mathbf{I}-\pmb{\rm F}^{*}  H(u) \pmb{\rm F}\big)^{-1}\,{\rm d}w.
\end{eqnarray}

The conformal mapping technique involves multiple transformations aimed at mapping the region of analyticity of $f$ and $\big(w^2\mathbf{I}-\pmb{\rm F}^{*}  H(u) \pmb{\rm F}\big)^{-1}$, which is the doubly connected set $\Xi = \mathbb{C} \setminus  \big( (-\infty,0] \cup [\frac{m}{2}, \frac{M}{2}]\big)$, to an annulus $A=\{z \in \mathbb{C}: r<|z|<R \}$, where $r$ and $R$ represent the inner and outer radii of the annulus, respectively.

We first map the annulus to a rectangle with vertices $\pm K$ and $\pm K+iK'$, using a logarithmic transformation
\begin{equation}
t(s) = \frac{2Ki}{\pi} \log\left(-\frac{is}{r}\right),
\end{equation}
where $K, K'$ denote the complete elliptic integrals. Next, the rectangle is mapped to the upper half-plane in the $u$-plane by the Jacobian elliptic function
\begin{equation}
u(t) = \text{sn}(t|k^2), \quad k = \frac{(M/m)^{1/4} - 1}{(M/m)^{1/4} + 1},
\end{equation}
where $M$ and $m$ correspond to the maximum and minimum eigenvalues of the matrix $\pmb{\rm F}^{*}  H(u) \pmb{\rm F}$.

A Möbius transformation is then applied to map the upper half-plane to the $z$-plane
\begin{equation}
w(u) = (M/m)^{1/4} \left(\frac{k^{-1} +u}{k^{-1} - u} \right).
\end{equation}
This final transformation is designed to distribute the eigenvalues of $A$ evenly along the real axis, thus facilitating the application of the trapezoidal rule~\cite{doi:10.1137/070700607}. 

Taking into account the aforementioned transformations with \eqref{eq:new_Su}, we can rewrite \eqref{eq:S_contour} as
\begin{eqnarray}\label{eq:S_contour_cint}
\mathcal{S}(u) = - \frac{1}{2\pi i} \oint_{-K + iK'/2}^{3K + iK'/2} w \log (w^2(t))\cdot {\rm Trace} \big(  w^2(t)\mathbf{I}-\pmb{\rm F}^{*}  H(u) \pmb{\rm F}\big)^{-1}\cdot \frac{{\rm d}w}{{\rm d}u} \cdot \frac{{\rm d}u}{{\rm d}t}\,{\rm d}t.
\end{eqnarray}
Applying the trapezoid rule with $N$ equally spaced points on the region $(-K+iK/2,K+iK/2)$, we can write
\begin{eqnarray}
\mathcal{S}(u) = \frac{4K(mM)^{1/4}}{\pi Nk} {\rm Im}\left(\sum_{j=1}^{N} \frac{w(t_j) f(w(t_j)^2)\big(w(t_j)^2 \mathbf{I} - \pmb{\rm F}^{*}  H(u) \pmb{\rm F}\big)^{-1} {\rm cn}(t_j) {\rm dn}(t_j)}{\big(k^{-1} - u(t_j)\big)^2}\right),
\end{eqnarray}
where 
\begin{equation*}
t_j = -K + \frac{iK'}{2} + 2 \frac{(j - \frac{1}{2})K}{N}, \quad 1 \leq j \leq N.
\end{equation*}

For the evaluation of the derivative of site entropy defined by \eqref{eq:local_S+}, we can use the same approach to obtain
\begin{align}\label{eq:partial}
\frac{\partial S_\ell}{\partial u_n} &=  -\frac{1}{2}  \frac{1}{2\pi i} \oint_{\mathcal{C}} \log (z)\cdot (\mathscr{R}_z\pmb{\rm F}_{\ell})^{*} \frac{\partial H(u)}{\partial u_n} (\mathscr{R}_z\pmb{\rm F}_{\ell})\,{\rm d}z \nonumber \\
&= \frac{4K(mM)^{1/4}}{\pi Nk} {\rm Im}\left(\sum_{j=1}^{N} \frac{w(t_j) f(w(t_j)^2) (\mathscr{R}_z\pmb{\rm F}_{\ell})^{*} \frac{\partial H(u)}{\partial u_n} (\mathscr{R}_z\pmb{\rm F}_{\ell})
{\rm cn}(t_j) {\rm dn}(t_j)}{\big(k^{-1} - u(t_j)\big)^2}\right),
\end{align}
where $\pmb{\rm F}_{\ell}$ represents the $\ell$-th column of $\pmb{\rm F}$ with $\ell\in\L$.

It is worthwhile mentioning that an additional computational cost for evaluating \eqref{eq:partial} stems from the computation of $\partial H(u)/\partial u_n$. In the locality test shown in Section~\ref{sec:th_rs}, one requires to compute $\partial H(u)/\partial u_n$ with respect to all sites $n$, which results in computing the Jacobian of the Hessian matrix.
We leverage the sparsity of the Hessian matrix to compute its Jacobian tensor. The sparsity pattern of a differential matrix in advance can greatly streamline computations. This is achieved by compressing the sparse matrix into a denser format, allowing for the processing of non-zero entries with significantly fewer function calls than previously necessary. 
This technique uses a strategy from graph theory \cite{SparseJac, kubale2004graph, wang2018posteriori}, aiming to combine columns with non-overlapping non-zero elements, referred to as {\it structurally orthogonal columns} into single groups, thus reducing the total number of groups needed. 

Once a group is determined, centered finite difference or automatic differentiation \cite{10.5555/3122009.3242010} can be used to calculate the directional derivatives along the compressed matrix directions. For more details on the graph coloring method for computing derivatives we refer to~\cite{SparseJac}. We developed and employed the {\tt ComplexElliptic.jl} \cite{gitComplexElliptic} package to perform conformal maps. Additionally, we utilized the \texttt{SparsityDetection.jl} \cite{gowda2019sparsity}, \texttt{Symbolics.jl} \cite{gowda2021high} and \texttt{ForwardDiff.jl} \cite{RevelsLubinPapamarkou2016} packages to evaluate the sparsity patterns of the Jacobians.

\section{Atomic Cluster Expansion}
\label{sec:apd:ACE}
We provide a brief overview of the Atomic Cluster Expansion (ACE) potential and refer to~\cite{2019-ship1, 2021-qmmm3, wang2020posteriori, witt2023otentials} for more detailed construction and discussion. The ACE site potential can be systematically formulated as an atomic body-order expansion for a given correlation order $N \in \mathbb{N}$,
\begin{equation}
V_{\text{ACE}}\big(\{\mathbf{y}_j\}_{j=1}^J\big) = \sum_{N=0}^{\mathscr{N}} \frac{1}{N!} \sum_{j_1, \dots, j_N = 1}^J V_N(\mathbf{y}_{j_1}, \ldots, \mathbf{y}_{j_N}),
\end{equation}
where the $N$-body potential $V_N : \mathbb{R}^{dN} \rightarrow \mathbb{R}$ is approximable by using a tensor product basis~\cite{2019-ship1},
\begin{equation*}
\phi_{{\bm n} {\bm \ell} {\bm m}}\big(\{\mathbf{y}_j\}_{j=1}^N\big) := \prod_{j=1}^{N} \phi_{{{\bm n}_j {\bm \ell}_j {\bm m}_j}}(\mathbf{y}_j),
\end{equation*}
where $\phi_{{\bm n} {\bm \ell} {\bm m}}(\mathbf{y}) := P_n(y)Y_{{\bm \ell}}^{{\bm m}}(\hat{\mathbf{y}}),$ with $\mathbf{y} \in \mathbb{R}^d$, $y = |\mathbf{y}|$, and $\hat{\mathbf{y}} = \mathbf{y}/y$. The functions $Y_{\ell}^{m}$ denote the complex spherical harmonics for $\ell = 0,1,\ldots$, and $m=-\ell,\ldots,\ell$, and $P_n(r)$ are radial basis functions for $n=0,1,\ldots$. 

This parameterization is already invariant under permutations (or, relabelling) of the input structure $\{{\bf y}_j\}_j$. One then symmetrizes the tensor product basis with respect to the group $O(3)$, employing the representation of that group in the spherical harmonics basis. This results in a linear parameterization
\begin{align*}
    V_{\text{ACE}}\big(\{\mathbf{y}_j\}_{j=1}^J\big)
    &= \sum_{{{\bm n} {\bm \ell} q}} c_{{{\bm n} {\bm \ell} q}} B_{{\bm n} {\bm \ell} q} (\{\mathbf{y}_j\}), \\ 
    \text{where} 
    \quad     
    B_{{\bm n} {\bm \ell} q} (\{\mathbf{y}_j\})
    &= \sum_{{\bm m } \in \mathcal{M}_{ {\bm \ell}}} \mathcal{C}^{{\bm n} {\bm \ell} q}_{{\bm m }} {\bm A}_{{\bm n} {\bm \ell}{\bm m} }(\{\mathbf{y}_j\}), \\ 
    \text{and} 
    \quad 
    {\bm A}_{{\bm n} {\bm \ell}{\bm m} }(\{\mathbf{y}_j\}) &= \prod_{\alpha=1}^{N} \sum_{j=1}^{J} \phi_{{\bm n}_{\alpha}{\bm \ell}_{\alpha}{\bm m}_{\alpha}}(\mathbf{y}_j).
\end{align*}
The $q$-index ranges from $1$ to $n_{\bm n \bm \ell}$ and enumerates the number of all possible invariant couplings through the generalized Clebsch--Gordan coefficients $\mathcal{C}^{{\bm n} {\bm \ell} q}_{{\bm m }}$. The 
This parameterization was originally devised in \cite{PhysRevB.99.014104}. Its approximation properties and computational complexity are analyzed in detail in \cite{2019-ship1, 2021-apxsym, PhysRevMaterials.6.013804, braun2022higher}. The implementation we employ in our work is described in \cite{witt2023otentials}. The latter reference also described the details of the choice of radial basis $P_n$.

To complete the description of our parameterization, we select a finite subset of the basis, 
\begin{align*}
\mathbf{B} := \Big\{ B_{{\bm n} {\bm \ell} q} \,\Big|\, & ({\bm n}, {\bm \ell}) \in \mathbb{N}^{2N} \text{ ordered}, ~ \sum_{\alpha} \ell_{\alpha} \text{ even},  
    \sum_{\alpha} m_\alpha = 0, \\ 
    & q = 1, \ldots, n_{{\bm n}{\bm \ell}}, 
    \sum_\alpha \ell_\alpha + n_\alpha \leq D_{\rm tot}, 
    N \leq \mathscr{N} 
    \Big\},
\end{align*}
where the two approximation parameters are the correlation order $\mathscr{N}$ and the total degree $D_{\rm tot}$. With this selection of the basis we can write our ACE parameterization more convenient as 
%
\begin{equation}
V^{\text{ACE}}(\{\mathbf{y}_j \}; \mathbf{c}) = \sum_{B \in \mathbf{B}} c_B B(\{\mathbf{y}_j\}), 
\end{equation}
where ${\bf c} = (c_B)_{B \in {\bf B}}$ are the model parameters. 
%
Due to the completeness of this representation, as the approximation parameters (like body-order, cut-off radius, and expansion precision) approach infinity, the model can represent any arbitrary potential \cite{2019-ship1}.

\label{sec:apd:ACE}

\bibliographystyle{plain}
\bibliography{bib.bib}

\end{document}